\newcommand{\DB}{\ensuremath{\mathcal{D}}}
\newcommand{\RR}{\ensuremath{\mathbb{R}}}
\newcommand{\MinDist}{\ensuremath{\textit{MinDist}}} %MinDist function
\newcommand{\MaxDist}{\ensuremath{\textit{MaxDist}}} %MaxDist function
\newtheorem{definition}{Definition}
\newtheorem{lemma}{Lemma}
\newtheorem{corollary}{Corollary}
\newtheorem{example}{Example}
\begin{document}

\newboolean{TR}
\setboolean{TR}{false}

\title{A Novel Probabilistic Pruning Approach to Speed Up Similarity Queries in Uncertain Databases}

\author{Thomas Bernecker,
        Tobias Emrich,
        Hans-Peter Kriegel,
        Nikos Mamoulis,
        Matthias Renz,
        and~Andreas~Zuefle
\thanks{}
\thanks{T. Bernecker, T. Emrich, H.P. Kriegel, M. Renz, and A. Zuefle are with the Ludwig-Maximilians-Universit\"{a}t,
M\"{u}nchen, Germany. E-mail:
\{bernecker,emrich,kriegel,renz,zuefle\}@dbs.ifi.lmu.de. N.
Mamoulis is with the University of Hong Kong, Pokfulam Road, Hong
Kong. E-mail: nikos@cs.hku.hk.} }

\maketitle

\begin{abstract}
In this paper, we propose a novel, effective and efficient
probabilistic pruning criterion for probabilistic similarity
queries on uncertain data. Our approach supports a general
uncertainty model using continuous probabilistic density functions
to describe the (possibly correlated) uncertain attributes of
objects. In a nutshell, the problem to be solved is to compute the
PDF of the random variable denoted by the \emph{probabilistic
domination count}: Given an uncertain database object $B$, an
uncertain reference object $R$ and a set $\DB$ of uncertain
database objects in a multi-dimensional space, the probabilistic
domination count denotes the number of uncertain objects in $\DB$
that are closer to $R$ than $B$. This domination count can be used
to answer a wide range of probabilistic similarity queries.
Specifically, we propose a novel geometric pruning filter and
introduce an iterative filter-refinement strategy for
conservatively and progressively estimating the probabilistic
domination count in an efficient way while keeping correctness
according to the possible world semantics. In an experimental
evaluation, we show that our proposed technique allows to acquire
tight probability bounds for the probabilistic domination count
quickly, even for large uncertain databases.
\end{abstract}

\section{Introduction}\label{sec:introduction}

In the past two decades, there has been a great deal of interest
in developing efficient and effective methods for similarity
queries, e.g. $k$-nearest neighbor search, reverse $k$-nearest
neighbor search and ranking in spatial, temporal, multimedia and
sensor databases. Many applications dealing with such data have to
cope with uncertain or imprecise data.

In this work, we introduce a novel scalable pruning approach to
identify candidates for a class of probabilistic similarity
queries. Generally spoken, probabilistic similarity queries
compute for each database object $o\in\DB$ the probability that a
given query predicate is fulfilled. Our approach addresses
probabilistic similarity queries where the query predicate is
based on object (distance) relations, i.e. the event that an
object $B$ belongs to the result set depends on the relation of
its distance to the query object $R$ and the distance of another
object $A$ to the query object. Exemplarily, we apply our novel
pruning method to the most prominent queries of the above
mentioned class, including the probabilistic $k$-nearest neighbor
(P$k$NN) query, the probabilistic reverse $k$-nearest neighbor
(PR$k$NN) query and the probabilistic inverse ranking query.

\begin{figure}
    \centering
       \includegraphics[width=0.4\columnwidth]{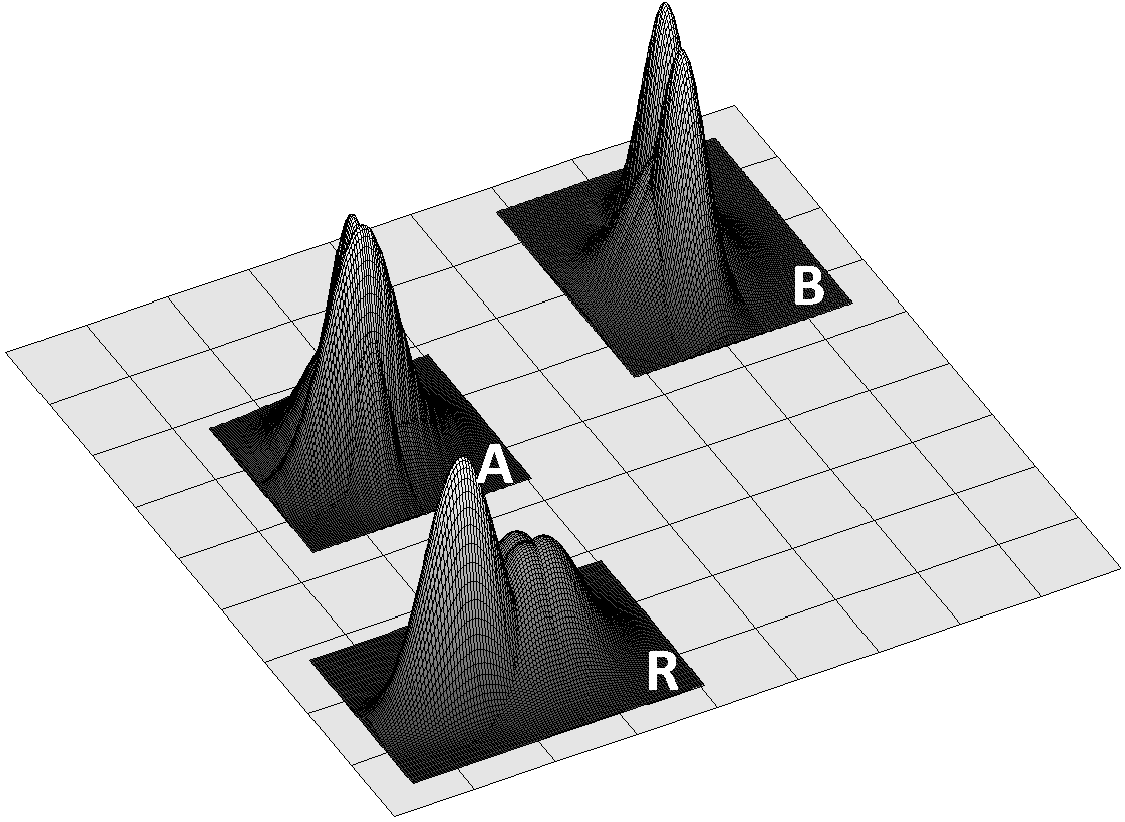}
    \caption{$A$ dominates $B$ w.r.t. $R$ with high probability.}
    \label{fig:basic_example}
\end{figure}

\subsection{Uncertainty Model}\label{sub:uncertaintyModel}

In this paper, we assume that the database $\DB$ consists of
multi-attribute objects $o_1,...,o_N$ that may have uncertain
attribute values. An uncertain attribute is defined as follows:
\begin{definition}[Probabilistic Attribute]
A probabilistic attribute $attr$ of object $o_i$ is a random
variable drawn from a probability distribution with density
function $f^{attr}_i$.
\end{definition}
An uncertain object $o_i$ has at least one uncertain attribute
value. The function $f_i$ denotes the multi-dimensional
probability density distribution (PDF) of $o_i$ that combines all
density functions for all probabilistic attributes $attr$ of
$o_i$.

Following the convention of uncertain databases
\cite{BesSolIly08,CheChe07,CheCheMokCho08,CheKalPra04,CorLiYi09,LianChen09,SinMayMitPraetal08},
we assume that $f_i$ is (minimally) bounded by an
\emph{uncertainty region} $R^i$ such that $\forall x \notin R^i:
f_i(x)=0$ and
$$
\int_{R^i}f_i(x)dx\leq 1.
$$
Specifically, the case $\int_{R^i}f_i(x)dx< 1$ implements
existential uncertainty, i.e. object $o_i$ may not exist in the
database at all with a probability greater than zero. In this
paper we focus on the case $\int_{R^i}f_i(x)dx = 1$, but the
proposed concepts can be easily adapted to existentially uncertain
objects. Although our approach is also applicable for unbounded
PDF, e.g., Gaussian PDF, here we assume $f_i$ exceeds zero only
within a bounded region. This is a realistic assumption because
the spectrum of possible values of attributes is usually bounded
and it is commonly used in related work, e.g.
\cite{CheChe07,CheCheMokCho08} and \cite{BesSolIly08}. Even if
$f_i$ is given as an unbounded PDF, a common strategy is to
truncate PDF tails with negligible probabilities and normalize the
resulting PDF. In specific, \cite{BesSolIly08} shows that for a
reasonable low truncation threshold, the impact on the accuracy of
probabilistic ranking queries is quite low while having a very
high impact on the query performance. In this way, each uncertain
object can be considered as a $d$-dimensional rectangle with an
associated multi-dimensional object PDF (c.f. Figure
\ref{fig:basic_example}). Here, we assume that uncertain
attributes may be mutually dependent. Therefore the object PDF can
have any arbitrary form, and in general, cannot simply be derived
from the marginal distribution of the uncertain attributes. Note
that in many applications, a discrete uncertainty model is
appropriate, meaning that the probability distribution of an
uncertain object is given by a finite number of alternatives
assigned with probabilities. This can be seen as a special case of
our model.

\subsection{Problem Formulation}\label{sub:problem}

We address the problem of detecting for a given uncertain object
$B$ the number of uncertain objects of an uncertain database $\DB$
that are closer to (\emph{i.e. dominate}) a reference object $R$
than $B$. We call this number the \emph{domination count} of $B$
w.r.t. $R$ as defined below:

\begin{definition}[Domination]
\label{def:Domination} Consider an uncertain database $\DB
=\{o_1,...,o_N\}$ and an uncertain reference object $R$. Let
$A,B\in \DB$. $Dom(A,B,R)$ is the random indicator variable that
is $1$, iff $A$ dominates $B$ w.r.t. $R$, formally:

$$
Dom(A,B,R) = \begin{cases}
  1,  & \text{if }dist(a,r)<dist(b,r) \\
  & \forall a \in A, b \in B, r \in R\\
  0, & \text{otherwise}
\end{cases}
$$
where $a,b$ and $r$ are samples drawn from the PDFs of $A,B$ and
$R$, respectively and $dist$ is a distance function on vector
objects.\footnote{We assume Euclidean distance for the remainder
of the paper, but the techniques can be applied to any $L_p$
norm.}
\end{definition}

\begin{definition}[Domination Count]
\label{def:DominationCount} Consider an uncertain database $\DB
=\{o_1,...,o_N\}$ and an uncertain reference object $R$. For each
uncertain object $B\in \DB$, let $DomCount(B,R)$ be the random
variable of the number of uncertain objects $A\in \DB$ ($A\neq B$)
that are closer to $R$ than $B$:
$$
DomCount(B,R)=\sum_{A\in \DB, A\neq B}Dom(A,B,R)
$$
\end{definition}
$DomCount(B,R)$ is the sum of $N-1$ non-necessarily identically
distributed and non-necessarily independent Bernoulli variables.
The problem solved in this paper is to efficiently compute the
probability density distribution of $DomCount(B,R) (B\in\DB)$
formally introduced by means of the probabilistic domination (cf.
Section \ref{sec:nnDomination}) and the probabilistic domination
count (cf. Section \ref{sec:DomCount}).

Determining domination is a central module for most types of
similarity queries in order to identify true hits and true drops
(pruning). In the context of probabilistic similarity queries,
knowledge about the PDF of $DomCount(B,R)$ can be used to find out
if $B$ satisfies the query predicate. For example, for a
probabilistic 5NN query with probability threshold $\tau = 10\%$
and query object $Q$, an object $B$ can be pruned (returned as a
true hit), if the probability $P(DomCount(B,Q)<5)$ is less (more)
than $10\%$.

\subsection{Overview}\label{sub:basicIdea}

Given an uncertain database $\DB =\{o_1,...,o_N\}$ and an
uncertain reference object $R$, our objective is to efficiently
derive the distribution of $DomCount(B,R)$ for any uncertain
object $B\in \DB$ and use it in the computation of probabilistic
similarity queries. First (Section \ref{sec:nnDomination}), we
build on the methodology of \cite{EmrKriKroRenZue10} to
efficiently find the complete set of objects in $\DB$ that
definitely dominate (are dominated by) $B$ w.r.t. $R$. At the same
time, we find the set of objects whose dominance relationship to
$B$ is uncertain. Using a decomposition technique, for each object
$A$ in this set, we can derive a lower and an upper bound for
$PDom(A,B,R)$, i.e., the probability that $A$ dominates $B$ w.r.t.
$R$. In Section \ref{sec:DomCount}, we show that due to
dependencies between object distances to $R$, these probabilities
cannot be combined in a straightforward manner to approximate the
distribution of $DomCount(B,R)$. We propose a solution that copes
with these dependencies and introduce techniques that help to to
compute the probabilistic domination count in an efficient way. In
particular, we prove that the bounds of $PDom(A,B,R)$ are mutually
independent if they are computed without a decomposition of $B$
and $R$. Then, we provide a class of uncertain generating
functions that use these bounds to build the distribution of
$DomCount(B,R)$. We then propose an algorithm which progressively
refines $DomCount(B,R)$ by iteratively decomposing the objects
that influence its computation (Section \ref{sec:implementation}).
Section \ref{sec:applications} shows how to apply this iterative
probabilistic domination count refinement process to evaluate
several types of probabilistic similarity queries. In Section
\ref{sec:experiments}, we experimentally demonstrate the
effectiveness and efficiency of our probabilistic pruning methods
for various parameter settings on artificial and real-world
datasets.

\section{Related Work}
\label{sec:related}

The management of uncertain data has gained increasing interest in
diverse application fields, e.g. sensor monitoring
\cite{CheSinPra05}, traffic analysis, location-based services
\cite{WolSisChaYes99} etc. Thus, modelling probabilistic databases
has become very important in the literature, e.g.
\cite{AntJanKocOlt07,SenDes07,SinMayMitPraetal08}. In general,
these models can be classified in two types: \emph{discrete} and
\emph{continuous} uncertainty models. \emph{Discrete models}
represent each uncertain object by a discrete set of alternative
values, each associated with a probability. This model is in
general adopted for probabilistic databases, where tuples are
associated with existential probabilities, e.g.
\cite{CorLiYi09,LiSahDes09,SolIly09,HuaPeiZhaLin08}.

In this work, we concentrate on the \emph{continuous} model in
which an uncertain object is represented by a probability density
function (PDF) within the vector space. In general, similarity
search methods based on this model involve expensive integrations
of the PDFs, hence special approximation and indexing techniques
for efficient query processing are typically employed
\cite{CheXiaPraShaVit04,TaoCheXiaNgaetal05}.

Uncertain similarity query processing has focused on various
aspects. A lot of existing work dealing with uncertain data
addresses probabilistic nearest neighbor (NN) queries for certain
query objects \cite{CheKalPra04,KriKunRen07} and for uncertain
queries \cite{IjiIsh09}. To reduce computational effort,
\cite{CheCheMokCho08} add threshold constraints in order to
retrieve only objects whose probability of being the nearest
neighbor exceeds a user-specified threshold to control the desired
confidence required in a query answer. Similar semantics of
queries in probabilistic databases are provided by Top-$k$ nearest
neighbor queries \cite{BesSolIly08}, where the $k$ most probable
results of being the nearest neighbor to a certain query point are
returned. Existing solutions on probabilistic $k$-nearest neighbor
($k$NN) queries restrict to expected distances of the uncertain
objects to the query object \cite{LjoSin07} or also use a
threshold constraint \cite{CheCheCheXie09}. However, the use of
expected distances does not adhere to the possible world semantics
and may thus produce very inaccurate results, that may have a very
small probability of being an actual result
(\cite{SolIly09,LiSahDes09}). Several approaches return the full
result to queries as a ranking of probabilistic objects according
to their distance to a certain query point
\cite{BerKriRen08,CorLiYi09,LiSahDes09,SolIly09}. However, all
these prior works have in common that the query is given as a
single (certain) point. To the best of our knowledge, $k$-nearest
neighbor queries as well as ranking queries on uncertain data,
where the query object is allowed to be uncertain, have not been
addressed so far. Probabilistic reverse nearest neighbor (RNN)
queries have been addressed in \cite{CheLinWanZhaPei10} to process
them on data based on discrete and continuous uncertainty models.
Similar to our solution, the uncertainty regions of the data are
modelled by MBRs. Based on these approximations, the authors of
\cite{CheLinWanZhaPei10} are able to apply a combination of
spatial, metric and probabilistic pruning criteria to efficiently
answer queries.

All of the above approaches that use MBRs as approximations for
uncertain objects utilize the minimum/maximum distance
approximations in order to remove possible candidates. However,
the pruning power can be improved using geometry-based pruning
techniques as shown in \cite{EmrKriKroRenZue10}. In this context,
\cite{LianChen09a} introduces a geometric pruning technique that
can be utilized to answer monochromatic and bichromatic
probabilistic RNN queries for arbitrary object distributions.

The framework that we introduce in this paper can be used to
answer probabilistic (threshold) $k$NN queries and probabilistic
reverse (threshold) $k$NN queries as well as probabilistic ranking
and inverse ranking queries for uncertain query objects.

\section{Similarity Domination on Uncertain Data}
\label{sec:nnDomination}

In this section, we tackle the following problem: Given three
uncertain objects $A$, $B$ and $R$ in a multidimensional space
$\mathbb{R}^d$, determine whether object $A$ is closer to $R$ than
$B$ w.r.t. a distance function defined on the objects in
$\mathbb{R}^d$. If this is the case, we say $A$ \emph{dominates}
$B$ w.r.t. $R$. In contrast to \cite{EmrKriKroRenZue10}, where
this problem is solved for certain data, in the context of
uncertain objects this domination relation is not a predicate that
is either true or false, but rather a (dichotomous) random
variable as defined in Definition \ref{def:Domination}. In the
example depicted in Figure \ref{fig:basic_example}, there are
three uncertain objects $A$, $B$ and $R$, each bounded by a
rectangle representing the possible locations of the object in
$\mathbb{R}^2$. The PDFs of $A$, $B$ and $R$ are depicted as well.
In this scenario, we cannot determine for sure whether object $A$
dominates $B$ w.r.t. $R$. However, it is possible to determine
that object $A$ dominates object $B$ w.r.t. $R$ with a high
probability. The problem at issue is to determine the
\emph{probabilistic domination probability} defined as:
\begin{definition}[Probabilistic Domination]
Given three uncertain objects $A$, $B$ and $R$, the probabilistic
domination $PDom(A,B,R)$ denotes the probability that $A$
dominates $B$ w.r.t. $R$.
\end{definition}

Naively, we can compute $PDom(A,B,R)$ by simply integrating the
probability of all possible worlds in which $A$ dominates $B$
w.r.t. $R$ exploiting inter-object independency:
$$
PDom(A,B,R)=\int_{a\in A}\int_{b\in B}\int_{r\in
R}\delta(a,b,r)\cdot P(A=a)\cdot P(B=b)\cdot P(R=r) da\, db\, dr,
$$
where $\delta(a,b,r)$ is the following indicator function:

$$
\delta (a,b,r) = \begin{cases}
  1,  & \text{if }dist(a,r)<dist(b,r)\\
  0, & \text{else}
\end{cases}
$$

The problem of this naive approach is the computational cost of
the triple-integral. The integrals of the PDFs of A, B and R may
in general not be representable as a closed-form expression and
the integral of $\delta(a,b,r)$ does not have a closed-from
expression. Therefore, an expensive numeric approximation is
required for this approach. In the rest of this section we propose
methods that efficiently derive bounds for $PDom(A,B,R)$, which
can be used to prune objects avoiding integral computations.

\subsection{Complete Domination}
\label{sub:spatialDomination}

First, we show how to detect whether $A$ completely dominates $B$
w.r.t. $R$ (i.e. if $PDom(A,B,R)=1$) regardless of the probability
distributions assigned to the rectangular uncertainty regions. The
state-of-the-art criterion to detect spatial domination on
rectangular uncertainty regions is with the use of minimum/maximum
distance approximations. This criterion states that $A$ dominates
$B$ w.r.t. $R$ if the minimum distance between $R$ and $B$ is
greater than the maximum distance between $R$ and $A$. Although
correct, this criterion is not tight (cf.
\cite{EmrKriKroRenZue10}), i.e. not each case where $A$ dominates
$B$ w.r.t. $R$ is detected by the min/max-domination criterion.
The problem is that the dependency between the two distances
between $A$ and $R$ and between $B$ and $R$ is ignored. Obviously,
the distance between $A$ and $R$ as well as the distance between
$B$ and $R$ depend on the location of $R$. However, since $R$ can
only have a unique location within its uncertainty region, both
distances are mutually dependent. Therefore, we adopt the spatial
domination concepts proposed in \cite{EmrKriKroRenZue10} for
rectangular uncertainty regions.

\begin{corollary}[Complete Domination]
\label{def:renzidist}
Let $A,B,R$ be uncertain objects with rectangular uncertainty
regions. Then the following statement holds:
$$
PDom(A,B,R) = 1 \Leftrightarrow \sum_{i=1}^d \max\limits_{r_i\in
\{R_i^{min}, R_i^{max}\}}(\MaxDist (A_i,r_i)^p - \MinDist
(B_i,r_i)^p) < 0,
$$
where $A_i,B_i$ and $R_i$ denote the projection interval of the
respective rectangular uncertainty region of $A$, $B$ and $R$ on
the $i^{th}$ dimension; $R_i^{min}$ $(R_i^{max})$ denotes the
lower (upper) bound of interval $R_i$, and $p$ corresponds to the
used $L_p$ norm. The functions $\MaxDist(A,r)$ and $\MinDist(A,r)$
denote the maximal (respectively minimal) distance between the
one-dimensional interval $A$ and the one-dimensional point $r$.
\end{corollary}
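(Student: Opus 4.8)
The plan is to establish the biconditional by analyzing the domination condition $\dist(a,r) < \dist(b,r)$ over all possible instantiations $a \in A$, $b \in B$, $r \in R$, and reducing it to a per-dimension, per-corner analysis of $R$. Since $PDom(A,B,R) = 1$ means $A$ dominates $B$ with respect to $R$ for \emph{every} triple of samples, the key observation is that complete domination is equivalent to the worst-case configuration still satisfying the strict domination inequality. Because we use the $L_p$ norm, the squared (or $p$-th power) distance decomposes additively across dimensions: $\dist(a,r)^p = \sum_{i=1}^d \dist(a_i,r_i)^p$. This lets me rewrite the domination condition $\dist(a,r)^p < \dist(b,r)^p$ as $\sum_i (\dist(a_i,r_i)^p - \dist(b_i,r_i)^p) < 0$, separating the contributions of each dimension while still coupling them through the shared point $r$.

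First I would fix the location of $r \in R$ and ask: for a given $r$, does $A$ dominate $B$ with respect to $r$ for all $a \in A$, $b \in B$? The worst case occurs when $a$ is chosen to \emph{maximize} $\dist(a_i,r_i)^p$ and $b$ is chosen to \emph{minimize} $\dist(b_i,r_i)^p$ independently in each dimension, since $a$ and $b$ range over their regions independently and the objective separates dimension-wise. This yields the quantity $\sum_{i=1}^d (\MaxDist(A_i,r_i)^p - \MinDist(B_i,r_i)^p)$, and complete domination for this fixed $r$ holds iff this sum is negative. The crucial subtlety is the direction of the inequality: because domination requires the strict inequality for all samples, including those approaching the supremum/infimum, I expect the condition to involve $< 0$ (not $\le 0$), which I would justify by a compactness/continuity argument on the closed uncertainty regions.

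The main obstacle, and the heart of the proof, will be the reduction over $r \in R$: I must show that the worst case over the entire (continuous) uncertainty region $R$ is attained by independently selecting, in each dimension $i$, the corner $r_i \in \{R_i^{min}, R_i^{max}\}$ that maximizes the per-dimension term $\MaxDist(A_i,r_i)^p - \MinDist(B_i,r_i)^p$. This is exactly what produces the $\max_{r_i \in \{R_i^{min}, R_i^{max}\}}$ in the statement. The nontrivial point is that although the distances from $A$ and from $B$ to $r$ are \emph{mutually dependent} through $r$ (precisely the dependency that the naive min/max criterion ignores), the objective function, once written as a sum of independent per-dimension terms, can be maximized \emph{coordinate-wise}. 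I would argue that for fixed $i$, the map $r_i \mapsto \MaxDist(A_i,r_i)^p - \MinDist(B_i,r_i)^p$ is a convex function of $r_i$ on the interval $R_i$ (as a difference where the maximizing term is convex and the subtracted minimizing term is concave in $r_i$), so its maximum over the interval is attained at one of the endpoints $R_i^{min}$ or $R_i^{max}$.

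Putting these together: $PDom(A,B,R) = 1$ iff domination holds for every $r \in R$, iff $\sup_{r \in R} \sum_i (\MaxDist(A_i,r_i)^p - \MinDist(B_i,r_i)^p) < 0$, and by coordinate-wise maximization over the convex per-dimension terms, this supremum equals $\sum_{i=1}^d \max_{r_i \in \{R_i^{min},R_i^{max}\}}(\MaxDist(A_i,r_i)^p - \MinDist(B_i,r_i)^p)$, yielding the claimed equivalence. I would close by remarking that the separation into independent coordinate maximizations is precisely what makes this criterion tight, in contrast to the min/max approximation which decouples the distances of $A$ and $B$ before optimizing over $r$ and thereby loses the shared dependence on $r$.
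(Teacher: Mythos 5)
Your overall architecture is sound, and it is in fact more self-contained than the paper's own treatment: the paper disposes of this corollary by citing the spatial-domination criterion of \cite{EmrKriKroRenZue10} and merely translating ``the inequality holds iff $\dist(a,r)<\dist(b,r)$ for all $a\in A, b\in B, r\in R$'' into possible-world semantics. Your reduction of $PDom(A,B,R)=1$ to $\sup_{r\in R}\bigl(\max_{a\in A}\dist(a,r)^p-\min_{b\in B}\dist(b,r)^p\bigr)<0$, the coordinate-wise separation of the inner $\max$/$\min$ over the boxes $A$ and $B$, and the observation that attainment of the extrema on compact boxes preserves the strict inequality, are all correct.

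The gap is in your justification of the one nontrivial step, the claim that $g_i(r_i)=\MaxDist(A_i,r_i)^p-\MinDist(B_i,r_i)^p$ attains its maximum over $R_i$ at an endpoint. You argue $g_i$ is convex because ``the subtracted minimizing term is concave in $r_i$.'' That premise is false: $\MinDist(B_i,r_i)=\max(0,\,B_i^{min}-r_i,\,r_i-B_i^{max})$ is a maximum of affine functions, hence convex, and composing with the increasing convex map $t\mapsto t^p$ keeps it convex. So $g_i$ is a difference of two convex functions, which need not be convex. Concretely, for $p=1$ and a configuration with $B_i^{min}$ lying left of the midpoint $m_{A_i}$ of $A_i$, the slope of $g_i$ is $0$ for $r_i<B_i^{min}$ and $-1$ just to the right of $B_i^{min}$ --- a concave kink. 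The conclusion you need is still true, but for a weaker reason: in every configuration $g_i$ is quasi-convex (non-increasing up to some point, then non-decreasing), which already forces its maximum over an interval onto an endpoint; establishing this requires a case analysis on the relative positions of $R_i$, $B_i$, and $m_{A_i}$ (or an appeal to \cite{EmrKriKroRenZue10}, which is what the paper does). As written, the heart of your proof rests on a false convexity claim and would not go through without that repair.
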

Corollary \ref{def:renzidist} follows directly from
\cite{EmrKriKroRenZue10}; the inequality is true if and only if
for all points $a\in A,b\in B,r\in R$, $a$ is closer to $r$ than
$b$. Translated into the possible worlds model, this is equivalent
to the statement that $A$ is closer to $R$ than $B$ for any
possible world, which in return means that $PDom(A,B,R) = 1$.

In addition, it holds that
\begin{corollary}
\label{cor:mutualexclusivenessofdomination}
$$
PDom(A,B,R) = 1 \Leftrightarrow PDom(B,A,R) = 0
$$
\end{corollary}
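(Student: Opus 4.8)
The plan is to prove the logical equivalence $PDom(A,B,R) = 1 \Leftrightarrow PDom(B,A,R) = 0$ by unfolding both sides into statements about possible worlds and showing they are two descriptions of the same geometric condition. The core observation is that, fixing a world $(a,b,r)$ with $a \in A$, $b \in B$, $r \in R$, the strict inequality $\dist(a,r) < \dist(b,r)$ is exactly the negation of $\dist(b,r) \leq \dist(a,r)$. I would use the reasoning already recorded after Corollary~\ref{def:renzidist}: $PDom(A,B,R) = 1$ holds if and only if in \emph{every} possible world $a$ is closer to $r$ than $b$, i.e. $\dist(a,r) < \dist(b,r)$ for all $a \in A, b \in B, r \in R$.

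First I would establish the forward direction. Assume $PDom(A,B,R) = 1$. By the characterization above, $\dist(a,r) < \dist(b,r)$ in every world, which in particular forces $\dist(b,r) < \dist(a,r)$ to fail in every world. Hence the set of worlds in which $B$ dominates $A$ w.r.t. $R$ has measure zero, so by the integral definition of $PDom$ we get $PDom(B,A,R) = 0$. For the reverse direction I would be more careful: $PDom(B,A,R) = 0$ says that the event ``$\dist(b,r) < \dist(a,r)$'' has probability zero, which a priori only rules out a positive-measure set of dominating worlds, not necessarily \emph{every} world. However, because we are dealing with complete domination as a deterministic geometric predicate over the rectangular regions (the ``$\forall$'' formulation in Definition~\ref{def:Domination} and Corollary~\ref{def:renzidist}), the quantities $PDom(A,B,R)$ and $PDom(B,A,R)$ can each only take the extreme value determined by whether the strict inequality holds uniformly, so $PDom(B,A,R)=0$ is equivalent to $B$ never dominating $A$, which via the geometric criterion of Corollary~\ref{def:renzidist} is equivalent to $A$ always dominating $B$, i.e. $PDom(A,B,R)=1$.

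The cleanest route, and the one I would adopt, is to avoid measure-theoretic subtleties entirely and argue directly through the deterministic criterion of Corollary~\ref{def:renzidist}. Both $PDom(A,B,R)=1$ and $PDom(B,A,R)=0$ reduce, via that corollary and its dual, to the same dimension-wise inequality $\sum_{i=1}^d \max_{r_i \in \{R_i^{min}, R_i^{max}\}}(\MaxDist(A_i,r_i)^p - \MinDist(B_i,r_i)^p) < 0$. Showing that $PDom(B,A,R)=0$ is equivalent to this same inequality is the heart of the argument: $B$ dominating $A$ in some world would require $\dist(b,r) < \dist(a,r)$ for some $a,b,r$, and the strict geometric separation guaranteed by the inequality precludes exactly this. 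Thus both conditions are equivalent to one identical inequality and hence to each other.

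I expect the main obstacle to be the reverse direction's handling of the boundary between strict and non-strict inequality. The predicate $\dist(a,r) < \dist(b,r)$ is strict, so the case of equality (a set of worlds where $\dist(a,r) = \dist(b,r)$) must be shown to contribute zero probability and to be consistent with \emph{neither} object strictly dominating; otherwise one could imagine $PDom(A,B,R) < 1$ while also $PDom(B,A,R) = 0$, which would break the equivalence. Resolving this cleanly requires noting that the complete-domination criterion uses a strict ``$< 0$'' and that equality-worlds, lying on a lower-dimensional locus, are null sets, so they do not affect either probability while also not constituting domination in either direction.
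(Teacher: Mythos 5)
The paper offers no proof of Corollary~\ref{cor:mutualexclusivenessofdomination}; it is simply asserted alongside Corollary~\ref{def:renzidist}. Your proposal is correct in substance, and you correctly locate the one point where the statement has real content: the reverse direction holds only because the tie event carries zero probability. The shortest complete argument is the complementarity you gesture at in your final paragraph: the events $E_1=\{\dist(a,r)<\dist(b,r)\}$, $E_2=\{\dist(b,r)<\dist(a,r)\}$ and $E_3=\{\dist(a,r)=\dist(b,r)\}$ partition the possible worlds, so $PDom(A,B,R)+PDom(B,A,R)+P(E_3)=1$; once $P(E_3)=0$ is granted (true for the continuous PDFs the paper assumes, since $E_3$ is a null set, but false for the discrete special case mentioned in Section~\ref{sub:uncertaintyModel}, where the corollary genuinely fails), the equivalence $P(E_1)=1\Leftrightarrow P(E_2)=0$ is immediate.

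Two problems with the route you actually adopt. First, the claim in your second paragraph that $PDom(A,B,R)$ and $PDom(B,A,R)$ ``can each only take the extreme value determined by whether the strict inequality holds uniformly'' is false: these probabilities range over all of $[0,1]$, which is the entire point of Section~\ref{sub:probabilisticDomination}. You discard that branch, but it should not appear even as scaffolding. Second, the detour through the geometric inequality of Corollary~\ref{def:renzidist} introduces a gap rather than closing one: that corollary characterizes the universally quantified condition ``$\dist(a,r)<\dist(b,r)$ for \emph{all} $a\in A, b\in B, r\in R$'', so to show $PDom(B,A,R)=0$ implies the inequality you must convert ``$E_2$ has probability zero'' into ``$E_2$ is empty''. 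For the open set $E_2$ this needs the densities to be strictly positive on the interiors of the uncertainty regions, an assumption you never state, and even then you still need the tie argument on top. The direct complementarity argument avoids both issues and is the proof you should give.
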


\begin{figure}
    \centering
    \subfigure[{\small Complete domination}%: $A$ dominates $B$, i.e. $PDom(A,B,R)=1.0$
    \label{subfig:SpatialDomination}]{\includegraphics[width=0.26\columnwidth]{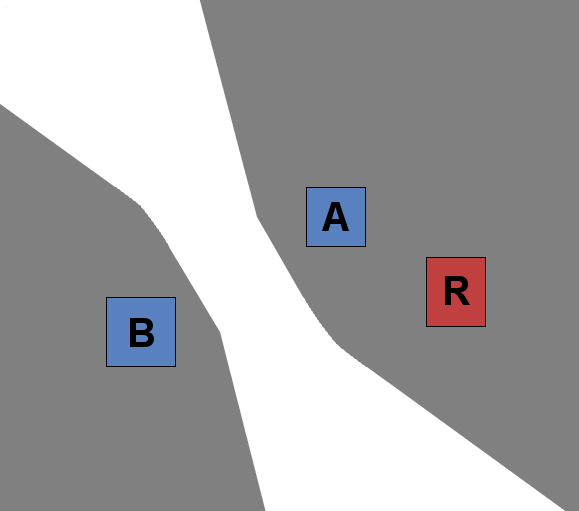}}\hspace{1cm}
    \subfigure[{\small Probabilistic domination}%: $A$ possibly dominates $B$, i.e. $0<PDom(A,B,R)<1.0$
    \label{subfig:ProbabilisticDomination}]{\includegraphics[width=0.3\columnwidth]{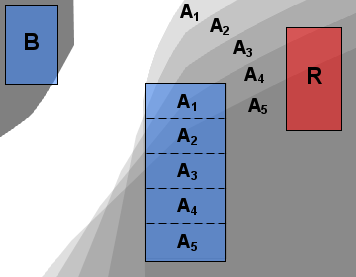}}
    \caption{Similarity Domination.}
    \label{fig:pruning_regions}
\end{figure}

In the example depicted in Figure \ref{subfig:SpatialDomination},
the grey region on the right shows all points that definitely are
closer to $A$ than to $B$ and the grey region on the left shows
all points that definitely are closer to $B$ than to $A$.
Consequently, $A$ dominates $B$ ($B$ dominates $A$) if $R$
completely falls into the right (left) grey shaded
half-space.\footnote{Note that the grey regions are not explicitly
computed; we only include them in Figure
\ref{subfig:SpatialDomination} for illustration purpose.}

\subsection{Probabilistic Domination}
\label{sub:probabilisticDomination}

Now, we consider the case where $A$ does not completely dominate
$B$ w.r.t. $R$. In consideration of the possible world semantics,
there may exist worlds in which $A$ dominates $B$ w.r.t. $R$, but
not all possible worlds satisfy this criterion. Let us consider
the example shown in Figure \ref{subfig:ProbabilisticDomination}
where the uncertainty region of $A$ is decomposed into five
partitions, each assigned to one of the five grey-shaded regions
illustrating which points are closer to the partition in $A$ than
to $B$. As we can see, $R$ only completely falls into three
grey-shaded regions. This means that $A$ does not completely
dominate $B$ w.r.t. $R$. However, we know that in some possible
worlds (at least in all possible words where $A$ is located in
$A_1$, $A_2$ or $A_3$) $A$ does dominate $B$ w.r.t. $R$. The
question at issue is how to determine the probability
$PDom(A,B,R)$ that $A$ dominates $B$ w.r.t. $R$ in an efficient
way. The key idea is to decompose the uncertainty region of an
object $X$ into subregions for which we know the probability that
$X$ is located in that subregion (as done for object $A$ in our
example). Therefore, if neither $Dom(A,B,R)$ nor $Dom(B,A,R)$
holds, then there may still exist subregions $A^\prime \subset A$,
$B^\prime\subset B$ and $R^\prime\subset R$ such that $A^\prime$
dominates $B^\prime$ w.r.t. $R^\prime$. Given disjunctive
decomposition schemes $\mathcal{\underline{A}}$,
 $\mathcal{\underline{B}}$ and $\mathcal{\underline{R}}$ we can
identify triples of subregions ($A^\prime \in
\mathcal{\underline{A}}$, $B^\prime \in \mathcal{\underline{B}}$,
$R^\prime\in \mathcal{\underline{R}}$) for which $Dom(A^\prime,
B^\prime, R^\prime)$ holds. Let $\delta(A^\prime, B^\prime,
R^\prime)$ be the following indicator function:

$$
\delta(A^\prime,B^\prime,R^\prime) = \begin{cases}
  1,  & \text{if }Dom(A^\prime,B^\prime,R^\prime)\\
  0, & \text{else}
\end{cases}
$$

\begin{lemma}
\label{lem:lb} \label{lem:decomposition} Let $A,B$ and $R$ be
uncertain objects with disjunctive object decompositions
$\mathcal{\underline{A}}, \mathcal{\underline{B}}$ and
$\mathcal{\underline{R}}$, respectively. To derive a lower bound
$PDom_{LB}(A,B,R)$ of the probability $PDom(A,B,R)$ that $A$
dominates $B$ w.r.t. $R$, we can accumulate the probabilities of
combinations of these subregions as follows:
$$
 PDom_{LB}(A,B,R)= \sum_{A^\prime \in
\mathcal{\underline{A}},B^\prime \in \mathcal{\underline{B}}
,R^\prime\in \mathcal{\underline{R}}}P(a\in A^\prime)\cdot P(b\in
B^\prime)\cdot P(r\in
R^\prime)\cdot\delta(A^\prime,B^\prime,R^\prime),
$$
where $P(X\in X^\prime)$ denotes the probability that object $X$
is located within the region $X^\prime$.
\end{lemma}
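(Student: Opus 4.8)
The plan is to show that the proposed sum is a valid lower bound by establishing a correspondence between the decomposed probability mass and the true domination probability $PDom(A,B,R)$, which (following the naive formulation given earlier) is the integral of the indicator $\delta(a,b,r)$ over all possible worlds weighted by the joint PDF. Since the objects are mutually independent, this joint density factors as $P(A=a)\cdot P(B=b)\cdot P(R=r)$, so
$$
PDom(A,B,R)=\int_{a\in A}\int_{b\in B}\int_{r\in R}\delta(a,b,r)\cdot P(A=a)\cdot P(B=b)\cdot P(R=r)\,da\,db\,dr.
$$
First I would use the fact that the decompositions $\mathcal{\underline{A}}$, $\mathcal{\underline{B}}$ and $\mathcal{\underline{R}}$ are \emph{disjunctive} (i.e. disjoint and covering) to partition each of the three integration domains. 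This lets me rewrite the single triple-integral as a sum of triple-integrals, one for each triple of subregions $(A^\prime,B^\prime,R^\prime)$, with no overlap and no omission of probability mass. Crucially, disjointness is what guarantees that the resulting sum neither double-counts nor loses any world, so the decomposed expression equals the original integral exactly on the pointwise indicator $\delta(a,b,r)$.

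The key step is then to compare, term by term, the fine-grained pointwise indicator $\delta(a,b,r)$ against the coarse subregion indicator $\delta(A^\prime,B^\prime,R^\prime)$. By Corollary \ref{def:renzidist}, whenever $Dom(A^\prime,B^\prime,R^\prime)$ holds — i.e. $\delta(A^\prime,B^\prime,R^\prime)=1$ — \emph{every} point triple $(a,b,r)$ with $a\in A^\prime$, $b\in B^\prime$, $r\in R^\prime$ satisfies $dist(a,r)<dist(b,r)$, so $\delta(a,b,r)=1$ on that entire cell. Hence on each cell we have the pointwise inequality $\delta(A^\prime,B^\prime,R^\prime)\le\delta(a,b,r)$: the coarse indicator can only be $1$ when the fine indicator is uniformly $1$, whereas the fine indicator may also be $1$ on cells where complete domination fails. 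Integrating this inequality over each cell and factoring the independent densities yields $P(a\in A^\prime)\cdot P(b\in B^\prime)\cdot P(r\in R^\prime)\cdot\delta(A^\prime,B^\prime,R^\prime)\le\int_{A^\prime}\int_{B^\prime}\int_{R^\prime}\delta(a,b,r)\cdot P(A=a)P(B=b)P(R=r)$, and summing over all cells gives $PDom_{LB}(A,B,R)\le PDom(A,B,R)$, as claimed.

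The main obstacle I anticipate is making the directional argument airtight: the bound is a \emph{lower} bound precisely because $\delta(A^\prime,B^\prime,R^\prime)=1$ requires complete (certain) domination of the whole subregion triple, so cells of partial domination contribute nothing to $PDom_{LB}$ but do contribute to the true $PDom$. I would therefore be careful to invoke Corollary \ref{def:renzidist} only in the direction that \emph{complete} subregion domination implies pointwise domination, and not conversely, since the reverse implication would be needed only for an upper bound and does not hold here. A secondary technical point is verifying that $P(a\in A^\prime)=\int_{A^\prime}P(A=a)\,da$ under the convention $\int_{R^i}f_i=1$, so that the factored subregion probabilities correctly match the restricted integrals of the density; this is immediate from the uncertainty model but should be stated so the factorization step is justified.
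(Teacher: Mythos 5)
Your proposal is correct and takes essentially the same route as the paper: the paper's proof argues in possible-worlds language (independence gives the product form for each cell's probability, disjointness makes the cells' world-sets disjoint so the probabilities add, and every world in a cell with $\delta(A^\prime,B^\prime,R^\prime)=1$ is a dominating world, while some dominating worlds are missed), and your integral decomposition with the pointwise inequality $\delta(A^\prime,B^\prime,R^\prime)\le\delta(a,b,r)$ is exactly that argument written out formally. The only cosmetic remark is that the implication ``$Dom(A^\prime,B^\prime,R^\prime)$ holds $\Rightarrow$ every point triple in the cell dominates'' is immediate from Definition \ref{def:Domination} itself rather than from Corollary \ref{def:renzidist}, which merely gives the computational test for it.
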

\begin{proof}
The probability of a combination $(A^\prime,B^\prime,R^\prime)$
can be computed by $P(a\in A^\prime)\cdot P(b\in B^\prime)\cdot
P(r\in R^\prime)$ due to the assumption of mutually independent
objects. These probabilities can be aggregated due to the
assumption of disjunctive subregions, which implies that any two
different combinations of subregions $(A^\prime \in
\mathcal{\underline{A}},B^\prime \in
\mathcal{\underline{B}},R^\prime\in \mathcal{\underline{R}})$ and
$(A^{\prime\prime} \in \mathcal{\underline{A}},B^{\prime\prime}
\in \mathcal{\underline{B}},R^{\prime\prime}\in
\mathcal{\underline{R}}$, $A^\prime \neq A^{\prime\prime} \vee
B^\prime \neq B^{\prime\prime} \vee R^\prime \neq
R^{\prime\prime}$ must represent disjunctive sets of possible
worlds. It is obvious that all possible worlds defined by
combinations $(A^\prime,B^\prime,R^\prime)$ where
$\delta(A^\prime,B^\prime,R^\prime)=1$, $A$ dominates $B$ w.r.t.
$R$. But not all possible worlds where $A$ dominates $B$ w.r.t.
$R$ are covered by these combinations and, thus, do not contribute
to $PDom_{LB}(A,B,R)$. Consequently, $PDom_{LB}(A,B,R)$ lower
bounds $PDom(A,B,R)$.
\end{proof}
Analogously, we can define an upper bound of $PDom(A,B,R)$:
\begin{lemma}
\label{lem:ub} An upper bound $PDom_{UB}(A,B,R)$ of $PDom(A,B,R)$
can be derived as follows:
$$
PDom_{UB}(A,B,R)=1-PDom_{LB}(B,A,R)
$$
\end{lemma}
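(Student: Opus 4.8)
The plan is to derive the upper bound by exploiting the mutual exclusiveness of domination established in Corollary \ref{cor:mutualexclusivenessofdomination}, which states $PDom(A,B,R)=1 \Leftrightarrow PDom(B,A,R)=0$. The key observation is that in any possible world (i.e., for any fixed choice of locations $a\in A$, $b\in B$, $r\in R$), exactly one of three mutually exclusive events occurs: either $A$ strictly dominates $B$ w.r.t.\ $R$, or $B$ strictly dominates $A$ w.r.t.\ $R$, or neither dominates (the tie case $\dist(a,r)=\dist(b,r)$). Since these events partition the space of possible worlds, their probabilities sum to at most one, giving $PDom(A,B,R) + PDom(B,A,R) \leq 1$, and hence $PDom(A,B,R) \leq 1 - PDom(B,A,R)$.

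From here the argument proceeds in two short steps. First I would invoke Lemma \ref{lem:decomposition} applied with the roles of $A$ and $B$ swapped, which guarantees that $PDom_{LB}(B,A,R)$ is a genuine lower bound on $PDom(B,A,R)$, i.e.\ $PDom_{LB}(B,A,R) \leq PDom(B,A,R)$. Second, combining this with the complementarity inequality above yields
$$
PDom(A,B,R) \leq 1 - PDom(B,A,R) \leq 1 - PDom_{LB}(B,A,R) = PDom_{UB}(A,B,R),
$$
which is exactly the claimed upper bound. The chain of inequalities is the heart of the proof: the first step bounds the true domination probability from above by the complement of the reverse true probability, and the second step replaces the unknown reverse probability by its computable lower bound, which only loosens the inequality in the correct direction.

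The main subtlety to address carefully is the treatment of the tie set where $\dist(a,r)=\dist(b,r)$. Because the domination indicator in Definition \ref{def:Domination} uses a strict inequality, the worlds on this measure-zero boundary contribute to neither $PDom(A,B,R)$ nor $PDom(B,A,R)$, so the two probabilities are indeed complementary only up to this tie set, justifying the inequality $PDom(A,B,R)+PDom(B,A,R)\leq 1$ rather than an equality. I expect this to be the one place where rigor is needed, but for continuous PDFs this boundary has Lebesgue measure zero and the inequality holds unconditionally regardless, so it poses no real obstacle; the slack is absorbed harmlessly since we only need an upper bound. Everything else reduces to the complementarity structure inherited from Corollary \ref{cor:mutualexclusivenessofdomination} and the already-established correctness of the lower bound in Lemma \ref{lem:decomposition}.
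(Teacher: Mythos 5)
Your proof is correct. The paper itself gives no explicit proof of Lemma~\ref{lem:ub} --- it is introduced only with the word ``Analogously'' after Lemma~\ref{lem:lb} --- and your argument supplies exactly the reasoning the authors leave implicit: in each possible world the events $\dist(a,r)<\dist(b,r)$ and $\dist(b,r)<\dist(a,r)$ are mutually exclusive, so $PDom(A,B,R)\leq 1-PDom(B,A,R)\leq 1-PDom_{LB}(B,A,R)$, using the already-established lower-bound property of Lemma~\ref{lem:decomposition} with $A$ and $B$ swapped. One small remark: Corollary~\ref{cor:mutualexclusivenessofdomination} as stated only covers the extreme case $PDom=1\Leftrightarrow PDom=0$ and is not by itself enough; what you actually use (and correctly derive from Definition~\ref{def:Domination}) is the stronger per-world exclusivity, and your handling of the tie set is the right way to justify the inequality rather than an equality.
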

%\begin{proof}
%This equation is evident due to Lemma \ref{lem:decomposition} and
%the observation that in any world where $A=a$, $B=b$ and $R=r$ it
%holds that $Dom(a,b,r)\Leftrightarrow \lnot Dom(b,a,r)$.
%\end{proof}
Naturally, the more refined the decompositions are, the tighter
the bounds that can be computed and the higher the corresponding
cost of deriving them. %Section \ref{sec_Appendix:kdTree} discusses
%how the decompositions are defined.
In particular, starting from the entire MBRs of the objects, we
can progressively partition them to iteratively derive tighter
bounds for their dependency relationships until a desired degree
of certainty is achieved (based on some threshold). However, in
the next section, we show that the derivation of the domination
count $DomCount(B,R)$ of a given object $B$ (cf. Definition
\ref{def:DominationCount}), which is the main module of prominent
probabilistic queries cannot be straightforwardly derived with the
use of these bounds and we propose a methodology based on
generating functions for this purpose.

\section{Probabilistic Domination Count}
\label{sec:DomCount} In Section \ref{sec:nnDomination} we
described how to conservatively and progressively approximate the
probability that $A$ dominates $B$ w.r.t. $R$. Given these
approximations $PDom_{LB}(A,B,R)$ and $PDom_{UB}(A,B,R)$, the next
problem is to cumulate these probabilities to get an approximation
of the domination count $DomCount(B,R)$ of an object $B$ w.r.t.
$R$ (cf. Definition \ref{def:DominationCount}). To give an
intuition how challenging this problem is, we first present a
naive solution that can yield incorrect results due to ignoring
dependencies between domination relations in Section
\ref{subsec:dependencies}. To avoid the problem of dependent
domination relations, we first show in Section
\ref{subsec:MutuallyIndependentDomApprox} how to exploit object
independencies to derive domination bounds that are mutually
independent. Afterwards, in Section \ref{subsec:GFs}, we introduce
a new class of uncertain generating functions that can be used to
derive bounds for the domination count efficiently, as we show in
Section \ref{subsec:UGFs}. Finally, in Section
\ref{subsec:ProbabilisticDominationCountEstimation}, we show how
to improve our domination count approximation by considering
disjunct subsets of possible worlds for which a more accurate
approximation can be computed.

\begin{figure}[t]
    \centering
       \includegraphics[width=0.25\columnwidth]{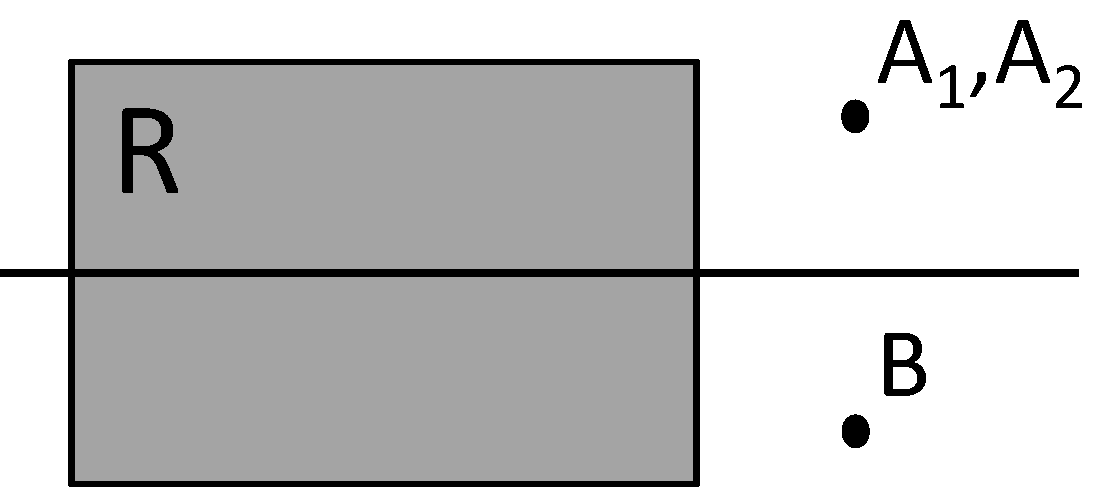}
    \caption{$A_1$ and $A_2$ dominate $B$ w.r.t. $R$ with a probability of 50\%, respectively.}
    \label{fig:dependency}
\end{figure}

\subsection{The Problem of Domination Dependencies}
\label{subsec:dependencies} To compute $DomCount(B,R)$, a
straightforward solution is to first approximate $PDom(A,B,R)$ for
all $A\in \DB$ using the technique proposed in Section
\ref{sec:nnDomination}. Then, given these probabilities we can
apply the technique of uncertain generating functions (cf. Section
\ref{GeneratingFunctions}) to approximate the probability that
exactly $0$, exactly $1$, ..., exactly $n-1$ uncertain objects
dominate $B$. However, this approach ignores possible dependencies
between domination relationships. Although we assume independence
between objects, the random variables $Dom(A_1,B,R)$ and
$Dom(A_2,B,R)$ are mutually dependent because the distance between
$A_1$ and $R$ depends on the distance between $A_2$ and $R$
because object $R$ can only appear once. Consider the following
example:
\begin{example}
\label{example:dependencies} Consider a database of three certain
objects $B$, $A_1$ and $A_2$ and the uncertain reference object
$R$, as shown in Figure \ref{fig:dependency}. For simplicity,
objects $A_1$ and $A_2$ have the same position in this example.
The task is to determine the domination count of $B$ w.r.t. $R$.
The domination half-space for $A_1$ and $A_2$ is depicted here as
well. Let us assume that $A_1$ ($A_2$) dominates $B$ with a
probability of $PDom(A_1,B,R)=PDom(A_2,B,R)=50\%$. Recall that
this probability can be computed by integration or approximated
with arbitrary precision using the technique of Section
\ref{sec:nnDomination}. However, in this example, the probability
that both $A_1$ and $A_2$ dominate $B$ is not simply $50\%\cdot
50\%=25\%$, as the generating function technique would return.

The reason for the wrong result in this example, is that the
generating function requires mutually independent random
variables. However, in this example, it holds that if and only if
$R$ falls into the domination half-space of $A_1$, it also falls
into the domination half-space of $A_2$. Thus we have the
dependency $dom(A_1,B,R) \leftrightarrow dom(A_2,B,R)$ and the
probability for $R$ to be dominated by both $A_1$ and $A_2$ is
$$
P(dom(A_1,B,R))\cdot P(dom(A_2,B,R)|dom(A_1,B,R))=0.5\cdot
1=0.5.$$
\end{example}

\subsection{Domination Approximations Based on Independent Objects}
\label{subsec:basicapprox} In general, domination relations may
have arbitrary correlations. Therefore, we present a way to
compute the domination count $DomCount(B,R)$ while accounting for
the dependencies between domination relations.
\subsubsection*{Complete Domination}
\label{subsec:MutuallyIndependentDomApprox} In an initial step,
\emph{complete domination} serves as a filter which allows us to
detect those objects $A\in\DB$ that definitely dominate a specific
object $B$ w.r.t. $R$ and those objects that definitely do not
dominate $B$ w.r.t. $R$ by means of evaluating $PDom(A,B,R)$. It
is important to note that complete domination relations are
mutually independent, since complete domination is evaluated on
the entire uncertainty regions of the objects. After applying
complete domination, we have detected objects that dominate $B$ in
all, or no possible worlds. Consequently, we get a first
approximation of the domination count $DomCount(B,R)$, obviously,
it must be higher than the number $N$ of objects that dominate $B$
and lower than $|\DB|-M$, where $M$ is the number of objects that
dominate $B$ in no possible world, i.e. $P(DomCount(B,R)=k)=0$ for
$k\leq N$ and $k\geq |\DB|-M$. Nevertheless, for $N<k<|\DB-M|$ we
still have a very bad approximation of the domination count
probability of $0\leq P(DomCount(B,R)=k)\leq 1$.

\subsubsection*{Probabilistic Domination}
 In order to refine this probability
distribution, we have to take the set of {\em influence} objects
$influenceObjects=\{A_1,...,A_C\}$, which neither completely prune
$B$ nor are completely dominated by $B$ w.r.t. $R$. For each
$A_i\in  influenceObjects$, $0<PDom(A_i,B,R)<1$. For these
objects, we can compute probabilities
$PDom(A_1,B,R),...,PDom(A_C,B,R)$ according to the methodology in
Section \ref{sec:nnDomination}. However, due to the mutual
dependencies between domination relations (cf. Section
\ref{subsec:dependencies}), we cannot simply use these
probabilities directly, as they may produce incorrect results.
However, we can use the observation that the objects $A_i$ are
mutually independent and each candidate object $A_i$ only appears
in a single domination relation $Dom(A_1,B,R),...,Dom(A_C,B,R)$.
Exploiting this observation, we can decompose the objects
$A_1,...,A_C$ only, to obtain mutually independent bounds for the
probabilities $PDom(A_1,B,R),...,PDom(A_C,B,R)$, as stated by the
following lemma:
\begin{lemma}\label{lem:independency} Let $A_1,...A_C$ be uncertain objects
with disjunctive object decompositions
$\mathcal{A}_1,...,\mathcal{A}_C$, respectively. Also, let $B$ and
$R$ be uncertain objects (without any decomposition). The lower
(upper) bound $PDom_{LB}(A_i,B,R)$ ($PDom_{UB}(A_i,B,R)$) as
defined in Lemma \ref{lem:lb} (Lemma \ref{lem:ub}) of the random
variable $Dom(A_i,B,R)$ is independent of the random variable
$Dom(A_j,B,R)$ ($1\leq i \neq j \leq C$).
\end{lemma}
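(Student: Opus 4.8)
The plan is to exhibit, for each $A_i$, an explicit Bernoulli random variable whose expectation equals the bound $PDom_{LB}(A_i,B,R)$ (resp. $PDom_{UB}(A_i,B,R)$), and then to observe that this variable is a function of the random position of $A_i$ alone. First I would specialize Lemma~\ref{lem:lb} to the present situation: since only $A_i$ carries a decomposition $\mathcal{A}_i$ while $B$ and $R$ are left whole, the terms $P(b\in B)$ and $P(r\in R)$ equal $1$ and the formula collapses to
$$PDom_{LB}(A_i,B,R)=\sum_{A' \in \mathcal{A}_i} P(a_i \in A')\cdot\delta(A',B,R).$$
Accordingly I would introduce the indicator $Dom_{LB}(A_i,B,R)=\sum_{A' \in \mathcal{A}_i}\mathbbm{1}[a_i \in A']\cdot\delta(A',B,R)$, whose expectation is exactly the lower bound. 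The structural point to stress is that each $\delta(A',B,R)$ is a deterministic $0/1$ value: by Corollary~\ref{def:renzidist} whether a subregion $A'$ completely dominates $B$ w.r.t. $R$ is decided purely by the three uncertainty regions, not by any realization of $B$ or $R$. Hence $Dom_{LB}(A_i,B,R)$ depends on the sample only through $a_i$.

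Next I would run the identical argument for the upper bound. By Lemma~\ref{lem:ub}, $PDom_{UB}(A_i,B,R)=1-PDom_{LB}(B,A_i,R)$, and with $B,R$ undecomposed the inner term reduces to $\sum_{A' \in \mathcal{A}_i} P(a_i \in A')\cdot\delta(B,A',R)$. The matching indicator $Dom_{UB}(A_i,B,R)=1-\sum_{A' \in \mathcal{A}_i}\mathbbm{1}[a_i \in A']\cdot\delta(B,A',R)$ is again a function of $a_i$ only, since every $\delta(B,A',R)$ is a fixed $0/1$ quantity for the same reason as above.

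The independence then falls out of the inter-object independence assumed throughout the paper. The variable $Dom(A_j,B,R)$ is a function of the samples $(a_j,b,r)$, while $Dom_{LB}(A_i,B,R)$ and $Dom_{UB}(A_i,B,R)$ are functions of $a_i$ alone. For $i\neq j$ the sample $a_i$ is drawn independently of the triple $(a_j,b,r)$, so each bound variable of $A_i$ is independent of $Dom(A_j,B,R)$; applying the same observation to two bound variables shows they are mutually independent across the index $i$, which is precisely the form needed by the generating-function construction of the following section.

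The step I expect to carry the real weight, rather than routine bookkeeping, is justifying that $\delta(A',B,R)$ (and $\delta(B,A',R)$) is genuinely deterministic and thus carries no randomness from $b$ or $r$. This is exactly where Example~\ref{example:dependencies} breaks down: there the spurious dependency between $Dom(A_1,B,R)$ and $Dom(A_2,B,R)$ is transmitted through the shared realization of $R$. By declining to decompose $B$ and $R$ and instead testing complete domination of each $A$-subregion against the \emph{whole} $B$ and \emph{whole} $R$ via Corollary~\ref{def:renzidist}, the shared samples $b$ and $r$ are effectively absorbed into the fixed coefficients $\delta$, leaving only the private randomness of $a_i$ in the bound indicators. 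Making this absorption rigorous is the crux; once it is established, the independence is immediate.
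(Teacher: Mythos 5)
Your proposal is correct and rests on the same two observations as the paper's own proof: because $B$ and $R$ are left undecomposed, every $\delta(\cdot,B,R)$ term is a fixed $0/1$ constant of the uncertainty regions, so the only randomness entering the bound is the sample of $A_i$, which by inter-object independence is independent of everything that determines $Dom(A_j,B,R)$. The paper packages this as a conditional-probability computation, showing $PDom_{LB}(A_i,B,R\mid Dom(A_j,B,R)=1)=PDom_{LB}(A_i,B,R)$ term by term, whereas you realize the bounds as explicit indicator variables measurable with respect to $a_i$ alone; this is a cosmetic (and arguably cleaner) reformulation of the same argument.
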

\begin{proof}
Consider the random variable $Dom(A_i,B,R)$ conditioned on the
event $Dom(A_j,B,R)=1$. Using Equation \ref{lem:decomposition}, we
can derive the lower bound probability of
$Dom(A_i,B,R)=1|Dom(A_j,B,R)=1$ as follows:
$$
PDom_{LB}(A_i,B,R|Dom(A_j,B,R)=1)=$$ $$\sum_{A_i^\prime \in
\mathcal A_i,B^\prime \in \mathcal B,R^\prime\in \mathcal
R}[P(a_i\in A_i^\prime|Dom(A_j,B,R)=1)\cdot P(b\in
B^\prime|Dom(A_j,B,R)=1)\cdot
$$ \vspace{-0.6cm}
$$
P(r\in
R^\prime|Dom(A_j,B,R)=1)\cdot
\delta(A_i^\prime,B^\prime,R^\prime)]
$$\clearpage
Now we exploit that $B$ and $R$ are not decomposed, thus $B^\prime
= B$ and $R^\prime = R$, and thus $P(B\in
B^\prime|Dom(A_j,B,R)=1)=1=P(B\in B^\prime)$ and $P(R\in
R^\prime|Dom(A_j,B,R)=1)=1=P(R\in R^\prime)$. We obtain:
$$
PDom_{LB}(A_i,B,R|Dom(A_j,B,R)=1)=$$ $$\sum_{A_i^\prime \in
\mathcal A_i,B^\prime \in \mathcal B,R^\prime\in \mathcal
R}[P(a_i\in A_i^\prime|Dom(A_j,B,R)=1)\cdot P(b\in B^\prime)\cdot
P(r\in R^\prime)\cdot \delta(A_i^\prime,B^\prime,R^\prime)]
$$
Next we exploit that $P(a_i\in A_i^\prime|Dom(A_j,B,R)=1)=P(a_i\in
A_i^\prime)$ since $A_i$ is independent from $Dom(A_j,B,R)$ and
obtain:
$$
PDom_{LB}(A_i,B,R|Dom(A_j,B,R)=1)= $$ $$\sum_{A_i^\prime \in
\mathcal A_i,B^\prime \in \mathcal B,R^\prime\in \mathcal
R}[P(a_i\in A_i^\prime)\cdot P(b\in B^\prime)\cdot P(r\in
R^\prime)\cdot \delta(A_i^\prime,B^\prime,R^\prime)]
=PDom_{LB}(A_i,B,R)
$$

Analogously, it can be shown that
$$
PDom_{UB}(A_i,B,R|Dom(A_j,B,R)=1)=PDom_{UB}(A_i,B,R).
$$
\end{proof}

In summary, we can now derive, for each object $A_i$ a lower and
an upper bound of the probability that $A_i$ dominates $B$ w.r.t.
$R$. However, these bounds may still be rather loose, since we
only consider the full uncertainty region of $B$ and $R$ so far,
without any decomposition. In Section
\ref{subsec:ProbabilisticDominationCountEstimation}, we will show
how to obtain more accurate, still mutual independent probability
bounds based on decompositions of $B$ and $R$. Due to the mutual
independency of the lower and upper probability bounds, these
probabilities can now be used to get an approximation of the
domination count of $B$. In order to do this efficiently, we adapt
the generating functions technique which is proposed in
\cite{LiSahDes09}. The main challenge here is to extend the
generating function technique in order to cope with probability
bounds instead of concrete probability values. It can be shown
that a straightforward solution based on the existing generating
functions technique applied to the lower/upper probability bounds
in an appropriate way does solve the given problem efficiently,
but overestimates the domination count probability and thus, does
not yield good probability bounds. Rather, we have to redesign the
generating functions technique such that lower/upper probability
bounds can be handled correctly.

\subsection{Uncertain Generating Functions (UGFs)}
\label{subsec:GFs} In this subsection, we will give a brief survey
on the existing generating function technique (for more details
refer to \cite{LiSahDes09}) and then propose our new technique of
uncertain generating functions.

\subsubsection*{Generating Functions}
\label{sec:generatingFunctionsAppendix}

Consider a set of $N$ \emph{mutually independent}, but not
necessarily identically distributed Bernoulli $\{0,1\}$ random
variables $X_1,...,X_N$. Let $P(X_i)$ denote the probability that
$X_i=1$. The problem is to efficiently compute the sum
$$
\sum_{i=1}^N X_i = \sum_{i=1}^{N}Dom(A_i,B,R)
$$
of these random variables. A naive solution would be to count, for
each $0 \leq k \leq N$, all combinations with exactly $k$
occurrences of $X_i=1$ and accumulate the respective probabilities
of these combinations. This approach, however, shows a complexity
of $O(2^N)$. In \cite{BerKriRenVerZue09}, an approach was proposed
that achieves an $O(N)$ complexity using the \emph{Poisson
Binomial Recurrence}. Note that $O(N)$ time is asymptotically
optimal in general, since the computation involves at least $O(N)$
computations, namely $P(X_i), 1\leq i\leq N$. In the following, we
propose a different approach that, albeit having the same linear
asymptotical complexity, has other advantages, as we will see. We
apply the concept of generating functions as proposed in the
context of probabilistic ranking in \cite{LiSahDes09}. Consider
the function ${\cal{F}}(x)=\prod^n_{i=1}(a_i+b_ix)$. The
coefficient of $x^k$ in ${\cal{F}}(x)$ is given by:
$\sum_{|\beta|=k}\prod_{i:\beta_i=0}a_i\prod_{i:\beta_i=1}b_i$,
where $\beta=\langle \beta_1,...,\beta_N\rangle$ is a Boolean
vector, and $|\beta|$ denotes the number of 1's in $\beta$.

\label{sub:Efficient-Computation-of-Prob-Support} Now consider the
following generating function:

$$
{\cal{F}}^i=\prod_{X_i}(1-P(X_i)+P(X_i)\cdot x) = \sum_{j\geq
0}c_jx^j.
$$

The coefficient $c_j$ of $x^j$ in the expansion of ${\cal{F}}^i$
is the probability that for exactly $j$ random variables $X_i$ it
holds that $X_i=1$. Since ${\cal{F}}^i$ contains at most $i+1$
non-zero terms and by observing that
$$
{\cal{F}}^i={\cal{F}}^{i-1}\cdot (1-P(X_i)+P(X_i)\cdot x),
$$
we note that ${\cal{F}}^{i}$ can be computed in $O(i)$ time given
${\cal{F}}^{i-1}$. Since ${\cal{F}}^0=1x^0=1$, we conclude that
${\cal{F}}^{N}$ can be computed in $O(N^2)$ time. If only the
first $k$ coefficients are required (i.e. coefficients $c_j$ where
$j<k$), this cost can be reduced to $O(k\cdot N)$, by simply
dropping the summands $c_jx^j$ where $j\geq k$.

\begin{example}
As an example, consider three \emph{independent} random variables
$X_1$, $X_2$ and $X_3$. Let $P(X_1)=0.2$, $P(X_2)=0.1$ and
$P(X_3)=0.3$, and let $k=2$. Then:
$$
{\cal{F}}^1={\cal{F}}^0\cdot (0.8+0.2x)=0.2x^1 + 0.8x^0
$$
$$
{\cal{F}}^2={\cal{F}}^1\cdot
(0.9+0.1x)=0.02x^2+0.26x^1+0.72x^0\stackrel{*}{=}0.26x^1+0.72x^0
$$
$$
{\cal{F}}^3={\cal{F}}^2\cdot (0.7+0.3x)=0.078x^2+0.418x^1+0.504x^0
$$
$$
\stackrel{*}{=}0.418x^1+0.504x^0
$$ Thus, $P(DomCount(B)=0)=50.4\%$ and
$P(DomCount(B)=1)=41.8\%$. We obtain $P(DomCount(B)<2)=92.2\%$.
Thus, $B$ can be reported as a true hit if $\tau$ is not greater
than $92.2\%$. Equations marked by * exploit that we only need to
compute the $c_j$ where $j<k=2$.
\end{example}

\subsubsection*{Uncertain Generating Functions}
\label{GeneratingFunctions} Given a set of $N$ independent but not
necessarily identically distributed Bernoulli $\{0,1\}$ random
variables $X_i,1\leq i \leq N$. Let $P_{LB}(X_i)$ ($P_{UB}(X_i)$)
be a lower (upper) bound approximation of the probability
$P(X_i=1)$. Consider the random variable
$$
\sum_{i=1}^{N}X_i.
$$
We make the following observation: The lower and upper bound
probabilities $P_{LB}(X_i)$ and $P_{UB}(X_i)$ correspond to the
probabilities of the three following events:
\begin{itemize}
\item $X_i=1$ definitely holds with a probability of at least
$P_{LB}(Dom(A_i,B,R))$.

\item $X_i=0$ definitely holds with a probability of at least
$1-P_{UB}(X_i)$.

\item It is unknown whether $X_i=0$ or $X_i=1$ with the remaining
probability of
$P_{UB}(Dom(A_i,B,R))-P_{LB}(Dom(A_i,B,R))=PDom_{UB}(A_i,B,R)-PDom_{LB}(A_i,B,R)$.
\end{itemize}

Based on this observation, we consider the following uncertain
generating function (UGF):

$$
{\cal{F}}^N=\prod_{i \in 1,...,N}[(P_{LB}(X_i)\cdot x +
(1-P_{UB}(X_i))\cdot y+ (P_{UB}(X_i)-P_{LB}(X_i)))] =
\sum_{i,j\geq 0}c_{i,j}x^i y^j.
$$

The coefficient $c_{i,j}$ has the following meaning: With a
probability of $c_{i,j}$, $B$ is definitely dominated at least $i$
times, and possibly dominated another $0$ to $j$ times. Therefore,
the minimum probability that $\sum_{i=1}^{N}X_i=k$ is $c_{k,0}$,
since that is the probability that exactly $k$ random variables
$X_i$ are $1$. The maximum probability that $\sum_{i=1}^{N}X_i=k$
is $\sum_{i\leq k,i+j\geq k}c_{i,j}$, i.e. the total probability
of all possible combinations in which $\sum_{i=1}^{N}X_i=k$, may
hold. Therefore, we obtain an approximated PDF of
$\sum_{i=1}^{N}X_i$. In the approximated PDF of
$\sum_{i=1}^{N}X_i$, each probability $\sum_{i=1}^{N}X_i=k$ is
given by a conservative and a progressive approximation.

\begin{example}
\label{ex:UGF} Let $P_{LB}(X_1)=20\%$, $P_{UB}(X_1)=70\%$,
$P_{LB}(X_2)=60\%$ and $P_{UB}(X_2)=80\%$. The generating function
for the random variable $\sum_{i=1}^{2}X_i$ is the following:
$$
{\cal{F}}^2=(0.2x+0.5y+0.3)(0.6x+0.2y+0.2)=0.12x^2+0.34x+0.1+0.22xy+0.16y+0.06y^2
$$
That implies that, with a probability of at least $12\%$,
$\sum_{i=1}^{2}X_i=2$. In addition, with a probability of $22\%$
plus $6\%$, it may hold that $\sum_{i=1}^{2}X_i=2$, so that we
obtain a probability bound of $12\%-40\%$ for the random event
$\sum_{i=1}^{2}X_i=2$. Analogously, $\sum_{i=1}^{2}X_i=1$ with a
probability of $34\%-78\%$ and $\sum_{i=1}^{2}X_i=0$ with a
probability of $10\%-32\%$. The approximated PDF of
$\sum_{i=1}^{2}X_i$ is depicted in Figure
\ref{fig:approximatePDF}.
\end{example}

\begin{figure}[t]
    \centering
    \includegraphics[width = 0.45\columnwidth]{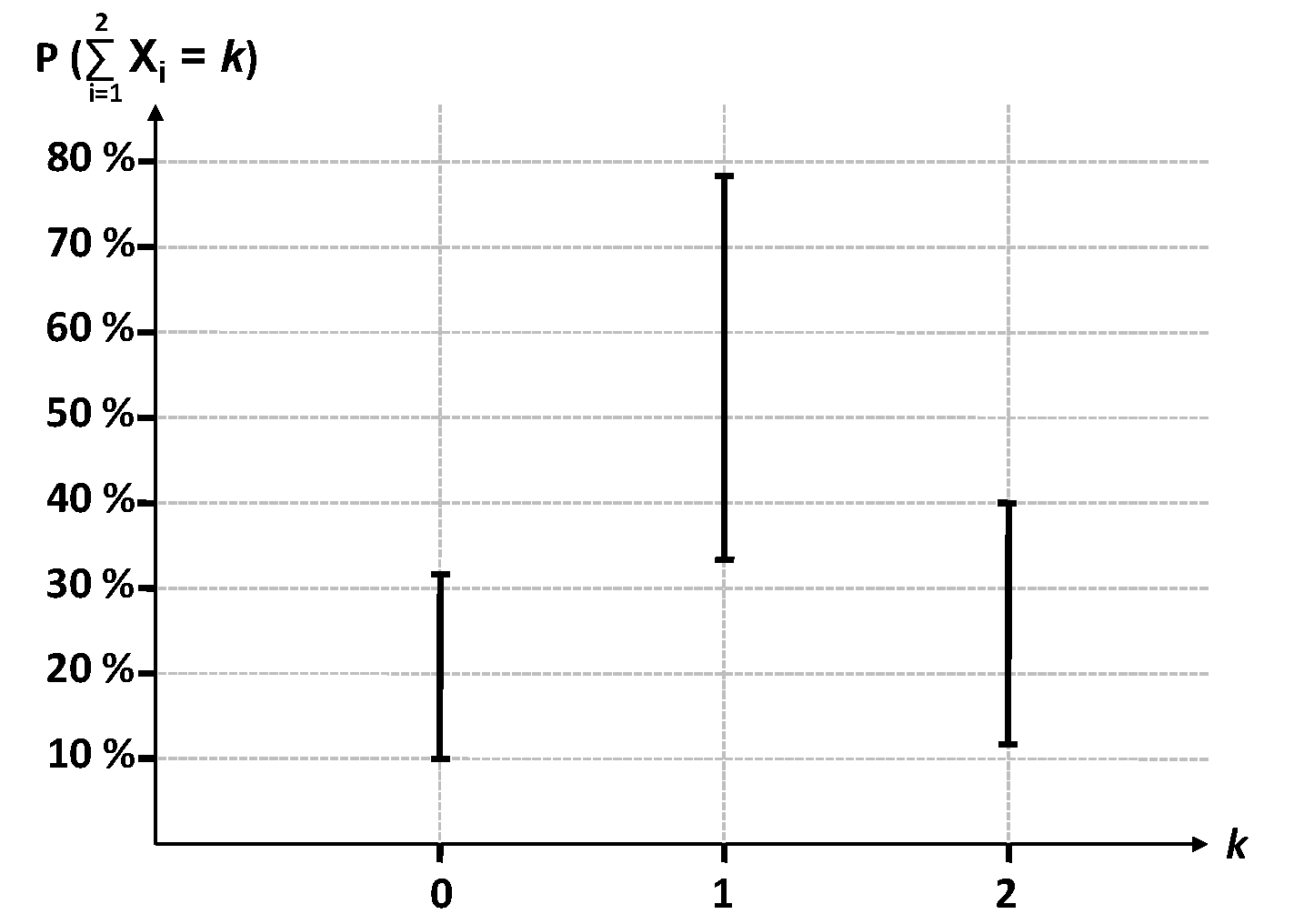}
    \caption{Approximated PDF of $\sum_{i=1}^{2}X_i$.}
  \label{fig:approximatePDF}
\end{figure}

Each expansion ${\cal{F}}^l$ can be obtained from the expansion of
${\cal{F}}^{l-1}$ as follows:

$${\cal{F}}^l={\cal{F}}^{l-1}\cdot$$
$$[P_{LB}(X_l)\cdot x + (1-P_{UB}(X_l)) + (P_{UB}(X_l)-P_{LB}(X_l))\cdot y].$$

We note that ${\cal{F}}^{l}$ contains at most $\sum_{i=1}^{l+1}i$
non-zero terms (one $c_{i,j}$ for each combination of $i$ and $j$
where $i+j\leq l$). Therefore, the total complexity to compute
${\cal{F}}^{l}$ is $O(l^3)$.

\subsection{Efficient Domination Count Approximation using UGFs} \label{subsec:UGFs}

We can directly use the uncertain generating functions proposed in
the previous section to derive bounds for the probability
distribution of the domination count $DomCount(B,R)$. Again, let
$\DB=A_1,...,A_N$ be an uncertain object database and $B$ and $R$
be uncertain objects in $\RR^d$. Let $Dom(A_i,B,R),1\leq i \leq N$
denote the random Bernoulli event that $A_i$ dominates $B$ w.r.t.
$R$.\footnote{That is, $X[Dom(A_i,B,R)]=1$ iff $A_i$ dominates $B$
w.r.t. $R$ and $X[Dom(A_i,B,R)]=0$ otherwise.} Also recall that
the domination count is defined as the random variable that is the
sum of the domination indicator variables of all uncertain objects
in the database (cf. Definition \ref{def:DominationCount}).

Considering the generating function
\begin{equation}
\label{eq:ugf} {\cal{F}}^N=\prod_{i \in
1,...,N}[(P_{LB}(Dom(A_i,B,R))\cdot x+
(P_{UB}(Dom(A_i,B,R))-P_{LB}(Dom(A_i,B,R)))\cdot
y)+$$\vspace{-0.6cm}
$$(1-P_{UB}(Dom(A_i,B,R)))] = \sum_{i,j\geq 0}c_{i,j}x^i y^j,
\end{equation}
we can efficiently compute lower and upper bounds of the
probability that $DomCount(B,R)=k$ for $0\leq k\leq|\DB|$, as
discussed in Section \ref{GeneratingFunctions} and because the
independence property of random variables required by the
generating functions is satisfied due to Lemma
\ref{lem:independency}.

\begin{lemma}
\label{lem:lbub} A lower bound $DomCount_{LB}^k(B,R)$ of the
probability that $DomCount(B,R)=k$ is given by
$$
DomCount_{LB}^k(B,R)=c_{k,0}
$$
and an upper bound $DomCount_{UB}^k(B,R)$ of the probability that
$DomCount(B,R)=k$ is given by
$$
DomCount_{UB}^k(B,R)=\sum_{i\leq k, i+j\geq k}c_{i,j}
$$
\end{lemma}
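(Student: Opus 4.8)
The plan is to read off the probabilistic meaning of each coefficient $c_{i,j}$ from the three-term structure of the factors in $\mathcal{F}^N$, and then translate the two claimed formulas into statements about which $(i,j)$-events are compatible with the event $DomCount(B,R)=k$. First I would fix an arbitrary admissible assignment of true probabilities $p_i := P(Dom(A_i,B,R)=1)$ satisfying $P_{LB}(Dom(A_i,B,R)) \le p_i \le P_{UB}(Dom(A_i,B,R))$, and couple each Bernoulli variable $Dom(A_i,B,R)$ with a three-valued label $Z_i$ taking values \emph{definite-dominate}, \emph{definite-non-dominate}, and \emph{uncertain}, with probabilities $P_{LB}$, $1-P_{UB}$, and $P_{UB}-P_{LB}$, respectively; conditioned on $Z_i$ being \emph{uncertain}, the outcome $Dom(A_i,B,R)$ resolves to $1$ with the residual probability $(p_i - P_{LB})/(P_{UB}-P_{LB}) \in [0,1]$ (the denominator being positive precisely on that event). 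This coupling reproduces $p_i$ exactly, and the labels $Z_i$ inherit mutual independence from the objects $A_i$ via Lemma \ref{lem:independency}.

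Using this independence, the standard generating-function expansion identifies $c_{i,j}$ as the joint probability that exactly $i$ objects carry the label \emph{definite-dominate} and exactly $j$ carry the label \emph{uncertain} (the remaining $N-i-j$ being \emph{definite-non-dominate}). With this reading, both bounds follow from an elementary case analysis. For the lower bound, I would note that on the event counted by $c_{k,0}$ there are exactly $k$ definite dominators and no uncertain objects, so $DomCount(B,R)=k$ holds deterministically there; hence this event lies entirely inside $\{DomCount(B,R)=k\}$ and contributes its full probability, giving $P(DomCount(B,R)=k) \ge c_{k,0}$.

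For the upper bound, I would observe that on the event counted by $c_{i,j}$ the count equals $i$ plus the number of uncertain objects whose outcome resolves to $1$, a quantity in $\{0,\dots,j\}$; thus $DomCount(B,R) \in \{i,\dots,i+j\}$ on this event. Consequently $DomCount(B,R)=k$ is possible only when $i \le k \le i+j$ and is impossible otherwise. Expanding $P(DomCount(B,R)=k)$ by the law of total probability over the $(i,j)$-events, then bounding each admissible conditional probability by $1$ and each inadmissible one by $0$, yields $P(DomCount(B,R)=k) \le \sum_{i\le k,\, i+j\ge k} c_{i,j}$.

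The routine combinatorics of reading off the coefficient of $x^i y^j$ is not where the difficulty lies. The main obstacle I expect is justifying that both inequalities hold \emph{uniformly} over every admissible true distribution, i.e. for all $p_i$ within the given bounds: the argument must not depend on the residual fractions $q_i = (p_i-P_{LB})/(P_{UB}-P_{LB})$, and indeed it does not, since on the $c_{k,0}$-event there are no uncertain objects, and in the upper-bound step the conditional probabilities are bounded by the constants $0$ and $1$ independently of the $q_i$. Making this coupling and its independence from the $q_i$ precise — together with invoking Lemma \ref{lem:independency} to license the product form of $\mathcal{F}^N$ — is the crux of the proof.
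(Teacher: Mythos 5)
Your proof is correct and matches the paper's own (informal) justification: the paper likewise reads $c_{i,j}$ as the probability that exactly $i$ objects definitely dominate $B$ and exactly $j$ are undecided, and obtains both bounds from the observation that the realized count then lies in $\{i,\dots,i+j\}$, with independence supplied by Lemma \ref{lem:independency}. One small remark: instead of building an auxiliary coupling with residual fractions $q_i$, you can define the label $Z_i$ directly on the original probability space as the subregion-membership event of $A_i$ from Lemma \ref{lem:lb}; the labels are then genuinely independent (each depends only on $A_i$'s position) and the implication ``definite-dominate $\Rightarrow Dom(A_i,B,R)=1$'' holds on the true joint law, which sidesteps any question of whether bounds proved for your constructed coupling transfer to the actual distribution of $DomCount(B,R)$ --- though, as you note, your inequalities never actually use the conditional independence of the resolutions, so the conclusion stands either way.
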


\begin{example}
Assume a database containing uncertain objects $A_1$, $A_2$, $B$
and $R$. The task is to determine a lower (upper) bound of the
domination count probability $DomCount_{LB}^k(B,R)$
($DomCount_{UB}^k(B,R)$) of $B$ w.r.t. $R$. Assume that, by
decomposing $A_1$ and $A_2$ and using the probabilistic domination
approximation technique proposed in Section
\ref{sub:probabilisticDomination}, we determine that $A_1$ has a
minimum probability $PDom_{LB}(A_1,B,R)$ of dominating $B$ of
$20\%$ and a maximum probability $PDom_{UB}(A_1,B,R)$ of $50\%$.
For $A_2$, $PDom_{LB}(A_2,B,R)$ is $60\%$ and $PDom_{UB}(A_2,B,R)$
is $80\%$. By applying the technique in the previous subsection,
we get the same generating function as in Example \ref{ex:UGF} and
thus, the same approximated PDF for the $DomCount(B,R)$ depicted
in Figure \ref{fig:approximatePDF}.
\end{example}

To compute the uncertain generating function and thus the
probabilistic domination count of an object in an uncertain
database of size $N$, the total complexity is $O(N^3)$. The reason
is that the maximal number of coefficients of the generating
function ${\cal{F}}^x$ is quadratic in $x$, since ${\cal{F}}^x$
contains coefficients $c_{i,j}$ where $i+j\leq x$, that is at most
$\frac{x^2}{2}$ coefficients. Since we have to compute
${\cal{F}}^x$ for each ($x<N$), the total time complexity is
$O(N^3)$. Note that only candidate objects $c\in Cand$ for which a
complete domination cannot be detected (cf. Section
\ref{sub:spatialDomination}) have to be considered in the
generating functions. Thus, the total runtime to compute
$DomCount_{LB}^k(B,R)$ as well as $DomCount_{UB}^k(B,R)$ is
$O(|Cand|^3)$. In addition, we will show in Section
\ref{sec:applications} how to reduce, specifically for $k$NN and
R$k$NN queries, the total time complexity to $O(k^2\cdot |Cand|)$.

\ifthenelse{\boolean{TR}} {

\subsection{Generating Functions vs Uncertain Generating Functions (Extended Version Only)}\label{sec:motivationGFAppendix}
It is clear that instead of applying the uncertain generating
function to approximate the domination count of $B$, two regular
generating functions can be used; one generating function that
uses the progressive (lower) bounds $P_{UB}(Dom(A_i,B,R))$ and one
that uses the conservative (upper) probability bounds
$P_{UB}(Dom(A_i,B,R))$. In the following we give an intuition and
a formal proof that using regular generating functions yields
looser bounds for the approximated domination.

Let $\DB=A_1,...,A_N$ be an uncertain object database and $B$ and
$R$ be uncertain objects in $\RR^d$. Let $Dom(A_i,B,R),1\leq i
\leq N$ denote the random Bernoulli event that $A_i$ dominates $B$
w.r.t. $R$. Let $P_{LB}(Dom(A_i,B,R))$ ($P_{UB}(Dom(A_i,B,R))$) be
a lower (upper) bound approximation of the probabilistic event
$X[Dom(A_i,B,R)]=1$.

A lower bound of the probability $DomCount(B,R)=k$ can be derived
using the following generating function:

\begin{equation}
\label{eq:gf_lb} {\cal{F}}^N=\prod_{i \in
1,...,N}[(P_{LB}(Dom(A_i,B,R))\cdot x + (1-P_{UB}(Dom(A_i,B,R)))]
= \sum_{i\geq 0}c_{i}x^i.
\end{equation}

Intuitively, this generating function uses the progressive
approximation $P_{LB}(Dom(A_i,B,R)$ of the probability that $A_i$
dominates $B$ w.r.t. $R$ and the progressive approximation
$1-P_{UB}(Dom(A_i,B,R))$ of the probability that $A_i$ does not
dominate $B$ w.r.t. $R$. This generating function is equal to the
uncertain generating function (cf. Equation \ref{eq:ugf}) if the
uncertain percentage (i.e. the coefficient of $y$) is omitted for
each candidate $A_i$, i.e. if the coefficient of each $y$ is set
to $0$.

\begin{lemma}
Let $c_k$ be the coefficients obtained by the generating function
in Equation \ref{eq:gf_lb} and let $P_{LB}(DomCount(B,R)=k)$ be
the lower bound derived by applying the generating function in
Equation \ref{eq:ugf} and exploiting Lemma \ref{lem:lbub}. It
holds that
$$
c_k=P_{LB}(DomCount(B,R)=k),
$$
i.e. the lower bound obtained by the (non-uncertain) generating
function in Equation \ref{eq:gf_lb} is as good as the lower bound
obtained using the technique in Section \ref{subsec:UGFs}.
\end{lemma}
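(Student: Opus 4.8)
The plan is to show that the "lower bound" coefficient $c_k$ produced by the non-uncertain generating function in Equation \ref{eq:gf_lb} coincides exactly with the coefficient $c_{k,0}$ extracted from the uncertain generating function in Equation \ref{eq:ugf}, which by Lemma \ref{lem:lbub} is precisely $P_{LB}(DomCount(B,R)=k)$. The key structural observation is that \eqref{eq:gf_lb} is obtained from \eqref{eq:ugf} by deleting the $y$-term from every factor: each factor $P_{LB}(Dom(A_i,B,R))\cdot x + (P_{UB}(Dom(A_i,B,R))-P_{LB}(Dom(A_i,B,R)))\cdot y + (1-P_{UB}(Dom(A_i,B,R)))$ becomes $P_{LB}(Dom(A_i,B,R))\cdot x + (1-P_{UB}(Dom(A_i,B,R)))$. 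So the statement reduces to a purely algebraic fact about bivariate polynomials: if you set $y=0$ in a product of linear-in-$x,y$ factors, the resulting univariate polynomial's coefficient of $x^k$ equals the coefficient of $x^k y^0$ in the original bivariate product.

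First I would make the factor correspondence explicit, writing $\mathcal{F}^N = \prod_i (\alpha_i x + \beta_i y + \gamma_i)$ with $\alpha_i = P_{LB}(Dom(A_i,B,R))$, $\beta_i = P_{UB}(Dom(A_i,B,R)) - P_{LB}(Dom(A_i,B,R))$ and $\gamma_i = 1 - P_{UB}(Dom(A_i,B,R))$, so that the uncertain coefficients satisfy $\mathcal{F}^N = \sum_{i,j\geq 0} c_{i,j} x^i y^j$. The generating function of Equation \eqref{eq:gf_lb} is then literally $\prod_i (\alpha_i x + \gamma_i)$, i.e. $\mathcal{F}^N|_{y=0}$, and I would write $\prod_i(\alpha_i x + \gamma_i) = \sum_{i\geq 0} c_i x^i$.

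The core step is then the evaluation $\mathcal{F}^N|_{y=0} = \sum_{i,j\geq 0} c_{i,j} x^i\, 0^j = \sum_{i\geq 0} c_{i,0} x^i$, using the convention $0^0=1$, so that every monomial with a positive power of $y$ is annihilated and only the $j=0$ monomials survive. Matching the coefficient of $x^k$ on both sides of $\sum_i c_i x^i = \sum_i c_{i,0} x^i$ gives $c_k = c_{k,0}$. Invoking Lemma \ref{lem:lbub}, which identifies $c_{k,0}$ with $DomCount_{LB}^k(B,R) = P_{LB}(DomCount(B,R)=k)$, yields the claimed equality $c_k = P_{LB}(DomCount(B,R)=k)$.

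There is no serious analytic obstacle here — the result is essentially a bookkeeping identity once the factor-by-factor correspondence between the two generating functions is recognized. The only point requiring a little care, which I would treat as the ``main obstacle,'' is arguing cleanly that setting $y=0$ commutes with taking the product and extracting coefficients, i.e. justifying that substitution into the expanded polynomial $\sum_{i,j} c_{i,j} x^i y^j$ is the same operation as substituting into each factor before multiplying; this is immediate since polynomial evaluation is a ring homomorphism, but it is the step one must state to make the proof rigorous rather than merely suggestive. I would also remark, to connect the algebra back to the semantics, that this confirms the intuition in the surrounding text: discarding the $y$-terms corresponds exactly to treating all ``uncertain'' domination events as non-dominations, which is precisely the conservative interpretation underlying the lower bound.
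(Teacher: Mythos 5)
Your proof is correct and takes essentially the same approach as the paper's: both identify the coefficient $c_k$ of Equation \ref{eq:gf_lb} with the coefficient $c_{k,0}$ of Equation \ref{eq:ugf} by observing that the two generating functions differ only in the $y$-terms, which cannot contribute to the $x^k y^0$ monomial, and then invoke Lemma \ref{lem:lbub}. Your framing via the substitution $y=0$ and the homomorphism property of polynomial evaluation is just a slightly more formal rendering of the paper's coefficient-matching argument.
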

\begin{proof}
The lower bound derived using the uncertain generating function
for the probability $DomCount(B,R)=k$ is equal to the coefficient
$c_{k,0}$. The coefficient $c_{k,0}$ corresponds to the variable
$x^ky^0=x^k$. Since Equation \ref{eq:gf_lb} and Equation
\ref{eq:ugf} are identical except for the variables containing at
least one $y$, but no $y$ is contained in the variable of the
coefficient $c_{k,0}$ in Equation \ref{eq:ugf}, it is identical to
the coefficient $c_k$ in Equation \ref{eq:gf_lb}.
\end{proof}

We can see that we can use a (non-uncertain) generating function
to derive the same lower bound. The advantage here is that the
(non-uncertain) generating function is easier to compute, due to a
much (linear in $k$) lower number of coefficients. However,
deriving an upper bound of the probability $DomCount(B,R)=k$ is
not as easy, because the upper bound does use the uncertainty of
objects.  An upper bound can be derived using the following
generating function:

\begin{equation}
\label{eq:gf_ub} {\cal{F}}^N=\prod_{i \in
1,...,N}[(P_{UB}(Dom(A_i,B,R))\cdot x+ (1-P_{LB}(Dom(A_i,B,R)))] =
\sum_{i\geq 0}c_{i}x^i.
\end{equation}

The idea is to use a conservative approximation
$P_{UB}(Dom(A_i,B,R)$ for the probability that $A_i$ dominates $B$
and a conservative approximation $1-P_{LB}(Dom(A_i,B,R))$ for the
probability that $A_i$ does not dominate $B$. The difference of
this generating function compared to the uncertain generating
function (cf. Equation \ref{eq:ugf}) is that the uncertain
percentage (the coefficient of $y$) is added to both probabilities
$Dom(A_i,B,R)$ and $\lnot (Dom(A_i,B,R)$\footnote{Note that adding
the coefficient of $y$ to only one of the summands will result in
incorrect bounds.}.

\begin{lemma}
Let $c_k$ be the coefficients obtained by the generating function
in Equation \ref{eq:gf_ub} and let $P_{LB}(DomCount(B,R)=k)$ be
the lower bound derived by applying the generating function in
Equation \ref{eq:ugf} and exploiting Lemma \ref{lem:lbub}. It
holds that
$$
c_k \geq P_{LB}(DomCount(B,R)=k),
$$
i.e. the upper bound obtained by the (non-uncertain) generating
function in Equation \ref{eq:gf_lb} is in general not as good as
the lower bound obtained using the technique in Section
\ref{subsec:UGFs}.
\end{lemma}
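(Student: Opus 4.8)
The plan is to show that the coefficient $c_k$ obtained from the upper-bound generating function in Equation \ref{eq:gf_ub} dominates, term by term, the lower bound $c_{k,0}=P_{LB}(DomCount(B,R)=k)$ produced by the UGF of Equation \ref{eq:ugf}. First I would fix a candidate object $A_i$ and compare the two factors that each framework contributes. In the UGF, the factor is $P_{LB}(Dom(A_i,B,R))\cdot x + (P_{UB}(Dom(A_i,B,R))-P_{LB}(Dom(A_i,B,R)))\cdot y + (1-P_{UB}(Dom(A_i,B,R)))$, whereas Equation \ref{eq:gf_ub} uses $P_{UB}(Dom(A_i,B,R))\cdot x + (1-P_{LB}(Dom(A_i,B,R)))$. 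The essential observation is that the $x$-coefficient in the upper-bound function equals the UGF's $x$-coefficient \emph{plus} the uncertain mass (the coefficient of $y$), and likewise the constant term $1-P_{LB}(Dom(A_i,B,R))$ equals the UGF's constant term $1-P_{UB}(Dom(A_i,B,R))$ \emph{plus} that same uncertain mass. In other words, Equation \ref{eq:gf_ub} is obtained from the UGF by collapsing the $y$-variable, but adding the uncertain coefficient to \emph{both} the $x$-term and the constant term simultaneously (as the footnote to Equation \ref{eq:gf_ub} emphasizes).

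The key step is then to track how the coefficient of $x^k$ accumulates across the product. I would argue that in the UGF expansion, $c_{k,0}$ collects exactly those products of per-object factors in which precisely $k$ objects contribute their $x$-term and the remaining objects contribute their constant term (no object contributes its $y$-term, since we require $y^0$). In the expansion of Equation \ref{eq:gf_ub}, the coefficient $c_k$ collects all products in which exactly $k$ objects contribute their (inflated) $x$-term and the rest contribute their (inflated) constant term. Since each inflated $x$-coefficient is at least the corresponding UGF $x$-coefficient and each inflated constant is at least the corresponding UGF constant, every summand appearing in $c_{k,0}$ is bounded above by the matching summand in $c_k$, and $c_k$ additionally contains strictly positive cross terms arising from the uncertain mass. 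Summing over all selections of $k$ objects yields $c_k \geq c_{k,0} = P_{LB}(DomCount(B,R)=k)$, which is the claim.

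To make the term-by-term comparison rigorous I would set up a monotonicity argument for generating functions: if two products $\prod_i (\alpha_i + \gamma_i x)$ and $\prod_i (\alpha_i' + \gamma_i' x)$ satisfy $\alpha_i' \geq \alpha_i \geq 0$ and $\gamma_i' \geq \gamma_i \geq 0$ for every $i$, then every coefficient of the second product is at least the corresponding coefficient of the first. This follows by induction on the number of factors, using the fact that multiplying a polynomial with nonnegative coefficients by a factor $(\alpha' + \gamma' x)$ with $\alpha'\geq\alpha\geq 0$, $\gamma'\geq\gamma\geq 0$ only increases each coefficient relative to multiplying by $(\alpha+\gamma x)$. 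Applying this with $\alpha_i = 1-P_{UB}(Dom(A_i,B,R))$, $\gamma_i = P_{LB}(Dom(A_i,B,R))$ for the UGF's $y^0$-slice, and $\alpha_i' = 1-P_{LB}(Dom(A_i,B,R))$, $\gamma_i' = P_{UB}(Dom(A_i,B,R))$ for Equation \ref{eq:gf_ub}, the inequalities $\alpha_i'\geq\alpha_i$ and $\gamma_i'\geq\gamma_i$ hold precisely because $P_{UB}(Dom(A_i,B,R))\geq P_{LB}(Dom(A_i,B,R))$.

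The main obstacle I anticipate is being careful about what the $y^0$-slice of the UGF actually computes: I must verify that extracting the $x^k$-coefficient of the UGF while discarding all terms containing any $y$ genuinely reproduces the product $\prod_i(\alpha_i + \gamma_i x)$ with the $\alpha_i, \gamma_i$ above, rather than some partial sum. This is true because the UGF factorizes over objects and the $y^0$-contribution of each factor is exactly $\alpha_i + \gamma_i x$; but one should state this decoupling explicitly so that the monotonicity lemma applies cleanly. A secondary point worth flagging is that the inequality is generally strict whenever some object has $P_{UB}(Dom(A_i,B,R)) > P_{LB}(Dom(A_i,B,R))$, which is exactly the case where the uncertain mass is nonzero; this explains the phrase ``in general not as good'' in the statement and shows the UGF delivers a genuinely tighter upper bound.
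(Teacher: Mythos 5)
Your monotonicity argument is internally correct as far as it goes: the $y^0$-slice of the UGF does factor as $\prod_i\bigl((1-P_{UB}(Dom(A_i,B,R))) + P_{LB}(Dom(A_i,B,R))\,x\bigr)$, and coefficientwise domination of products of nonnegative linear factors follows by induction, so $c_k \geq c_{k,0}$ holds. The problem is that this proves the statement only as literally (mis)printed, and the literal statement is vacuous: it merely says that an upper bound on $P(DomCount(B,R)=k)$ exceeds a lower bound on the same quantity. The lemma's actual content --- made clear by its ``i.e.'' clause, by the symmetry with the preceding lemma (where the non-uncertain lower-bound GF was shown to be \emph{exactly} as good as the UGF's lower bound), and by the paper's own proof --- is that $c_k$ from Equation \ref{eq:gf_ub} is at least the UGF's \emph{upper} bound $P_{UB}(DomCount(B,R)=k)=\sum_{i\leq k,\, i+j\geq k}c_{i,j}$, i.e. that collapsing the $y$-variable makes the upper bound strictly looser; the $P_{LB}$ in the displayed inequality is a typo for $P_{UB}$. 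Your closing remark that your argument ``shows the UGF delivers a genuinely tighter upper bound'' is precisely the claim you have not established: comparing $c_k$ against $c_{k,0}$ says nothing about how it relates to $\sum_{i\leq k,\, i+j\geq k}c_{i,j}$.

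The paper establishes the intended inequality by a two-object example, expanding $(A^+x+A^?y+A^-)(B^+x+B^?y+B^-)$ against $((A^++A^?)x+(A^-+A^?))((B^++B^?)x+(B^-+B^?))$ and computing $c_1-P_{UB}(DomCount(B,R)=1)=A^?B^?$. To repair your proof in the same spirit as your own multiplicity bookkeeping, track monomials $\prod_i s_i$ with $s_i\in\{A_i^+,A_i^?,A_i^-\}$: a monomial with $i$ plus-factors and $j$ question-factors contributes to $P_{UB}(DomCount(B,R)=k)$ with multiplicity $1$ exactly when $i\leq k\leq i+j$, whereas in $c_k$ of Equation \ref{eq:gf_ub} it appears with multiplicity $\binom{j}{k-i}$, the number of ways to split the $j$ uncertain objects between the $x$-term and the constant term so that exactly $k$ objects contribute an $x$. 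Since $\binom{j}{k-i}\geq 1$ whenever $i\leq k\leq i+j$ and can exceed $1$, the difference $c_k-P_{UB}(DomCount(B,R)=k)$ is a nonnegative sum of terms counting some possible worlds more than once --- exactly the intuition the paper states after its proof.
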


\begin{proof}
We give an example where the upper bound derived by Equation
\ref{eq:gf_lb} is worse than the upper bound derived by Equation
\ref{eq:ugf} and exploiting Lemma \ref{lem:lbub}. Assume a
database containing uncertain objects $A_1$, $A_2$, $B$ and $R$.
The task is to determine the probability that $DomCount(B,R)=1$.
Also assume that we have determined (e.g. as proposed in Section
\ref{sub:probabilisticDomination}) a lower bound
$P_{LB}(Dom(A_1,B,R))$ ($P_{LB}(Dom(A_2,B,R))$) and an upper bound
$P_{UB}(Dom(A_1,B,R))$ ($P_{UB}(Dom(A_2,B,R))$) of the probability
that $A_1$ ($A_2$) dominates $B$ w.r.t. $R$. To ease the notation,
let $A_i^+$ denote $P_{LB}(Dom(A_i,B,R))$, let $A_i^-$ denote
$1-P_{UB}(Dom(A_i,B,R))$ and let $A_i^?$ denote the uncertain
fraction $P_{UB}(Dom(A_i,B,R))-P_{LB}(Dom(A_i,B,R))$. Using this
notation, Equation \ref{eq:ugf} becomes:
$$
(A^+x+A^?y+A^-)\cdot (B^+x+B^?y+B^-)
$$
Expansion yields:
$$
A^+B^+x^2+A^+B^?xy+A^+B^-x+A^?B^+xy+A^?B^?y^2+
A^?B^-y+A^-B^+x+A^-B^?y+A^-B^-
$$
Exploiting Lemma \ref{lem:lbub} yields the following upper bound
probability for $DomCount(B,R)=1$:
$$
P_{UB}(DomCount(B,R)=1)=
A^+B^?+A^+B^-+A^?B^++A^?B^?+A^-B^?+A^-B^++A^-B^?
$$

On the other hand, Equation \ref{eq:gf_lb} becomes:
$$
((A^++A^?)x+A^-+A^?)\cdot ((B^++B^?)x+B^-+B^?)
$$
Expansion yields:
$$
(A^++A^?)\cdot(B^++B^?)x^2+(A^++A^?)\cdot (B^-+B^?)x +
(A^-+A^?)\cdot(B^++B^?)x+(A^-+A^?)\cdot(B^-+B^?)
$$
Extracting the coefficient $c_1$ of $x^1$ yields:
$$
c_1=(A^++A^?)\cdot (B^-+B^?)+(A^-+A^?)\cdot(B^++B^?)
$$
Expansion yields:
$$
c_1=A^+B^-+A^++B^?+A^?B^-+A^?B^?+ A^-B^++A^-B^?+A^?B^++A^?B^?
$$
Comparing $P_{UB}(DomCount(B,R)=1)$ and $c_1$, we obtain:
$$
c_1-P_{UB}(DomCount(B,R)=1)=A^?B^?
$$
Thus, the upper bound $c_1$ of the (non-uncertain) generating
function is greater (and thus worse) than the upper bound
$P_{UB}(DomCount(B,R)=1)$ of the uncertain generating function by
$A^?B^?$.
\end{proof}
Intuitively, the problem of the upper bound using the
(non-uncertain) generating function is that the uncertain fraction
(i.e. $A^?$) is added to both the probability that $A_i$ dominates
$B$ (i.e. $A^+$) and to the probability that $A_i$ does not
dominate $B$ (i.e. $A^-$). Therefore, this approach incorrectly
considers some possible worlds more often than once.

} {

\subsubsection*{Discussion}
In the extended version of this paper (\cite{ICDE2010extended}),
we show that instead of applying the uncertain generating function
to approximate the domination count of $B$, two regular generating
functions can be used; one generating function that uses the
progressive (lower) bounds $P_{UB}(Dom(A_i,B,R))$ and one that
uses the conservative (upper) probability bounds
$P_{UB}(Dom(A_i,B,R))$. However, we give an intuition and a formal
proof that using regular generating functions yields looser bounds
for the approximated domination. }

\subsection{Efficient Domination Count Approximation Based on Disjunctive Worlds}
\label{subsec:ProbabilisticDominationCountEstimation} Since the
uncertain objects $B$ and $R$ appear in each domination relation
$PDom(A_1,B,R)$,$...$,\linebreak$PDom(A_C,B,R)$ that is to
evaluate, we cannot split objects $B$ and $R$ independently (cf.
Section \ref{subsec:dependencies}). The reason for this dependency
is that knowledge about the predicate $Dom(A_i,B,R)$ may impose
constraints on the position of $B$ and $R$. Thus, for a partition
$B_1 \subset B$, the probability $PDom(A_j,B_1,R)$ may change
given $Dom(A_i,B,R)$ ($1\leq i,j \leq C, i\neq j$). However, note:
\begin{lemma}
\label{lem:part_independency} Given fixed partitions $B^\prime
\subseteq B$ and $R^\prime \subseteq R$, then the random variables
$Dom(A_i, B^\prime, R^\prime)$ are mutually independent for $1\leq
i,j \leq C, i\neq j$.
\end{lemma}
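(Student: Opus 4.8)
Lemma \ref{lem:part_independency} asserts that once we *fix* subregions $B'\subseteq B$ and $R'\subseteq R$, the domination events $Dom(A_i,B',R')$ across different influence objects $A_i$ become mutually independent. This is the natural generalization of Lemma \ref{lem:independency}: there we exploited that $B$ and $R$ are *undecomposed* so that conditioning on $Dom(A_j,B,R)$ leaves the $A_i$-marginals untouched; here the claim is that the same conclusion survives when we restrict attention to a *single* fixed pair of partitions $(B',R')$, even though decomposing $B$ and $R$ globally would reintroduce dependencies (as Section \ref{subsec:dependencies} warns).

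**Plan.** The plan is to reduce this to Lemma \ref{lem:independency} by treating the fixed partitions $B'$ and $R'$ as the full (effective) uncertainty regions of two new undecomposed objects. The key observation is that "fixing" $B'$ and $R'$ means we condition on the events $b\in B'$ and $r\in R'$, which restricts the PDFs $f_B$ and $f_R$ to their (renormalized) portions supported on $B'$ and $R'$ respectively. With respect to these conditioned distributions, $B'$ and $R'$ play exactly the role that the undecomposed $B$ and $R$ played in Lemma \ref{lem:independency}.

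**Key steps, in order.** First I would fix the pair $(B',R')$ and define conditioned objects $\tilde B,\tilde R$ whose PDFs are $f_B$ and $f_R$ restricted and renormalized to $B'$ and $R'$; note that $\tilde B$ and $\tilde R$ carry no further decomposition. Second, I would invoke the inter-object independence assumption (from Lemma \ref{lem:lb}): the positions of $A_i$, $A_j$, $\tilde B$, and $\tilde R$ are mutually independent random variables. Third, I would write $Dom(A_i,B',R')$ purely as a deterministic event on the joint position $(a_i,b,r)$, namely $\delta(a_i,b,r)=1$ with $b\in B'$, $r\in R'$ — crucially, this event depends only on $(a_i,b,r)$ and not on $a_j$. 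Fourth, conditioning $Dom(A_i,B',R')$ on $Dom(A_j,B',R')=1$, the constraint imposed by the latter concerns only $a_j,b,r$; since $b$ and $r$ are held fixed in $B'$ and $R'$ (they are common to both events) and $a_i$ is independent of $a_j$, the conditional distribution of $a_i$ is unchanged. This mirrors exactly the two exploitation steps in the proof of Lemma \ref{lem:independency}: $P(b\in B'\mid\cdot)=P(b\in B')$ and $P(r\in R'\mid\cdot)=P(r\in R')$ because $B'$ and $R'$ are not sub-decomposed within this fixed pair, and $P(a_i\in A_i'\mid\cdot)=P(a_i\in A_i')$ by independence of $A_i$ from $A_j$.

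**Main obstacle.** The subtle point — and the step I expect to require the most care — is articulating \emph{why} fixing a single pair $(B',R')$ restores independence while the global decomposition of $B$ and $R$ (Section \ref{subsec:dependencies}) destroys it. The resolution is that the dependency in Example \ref{example:dependencies} arises from \emph{summing over different subregions} of the shared objects: knowing $Dom(A_j,B,R)$ informs us \emph{which} region of $R$ (or $B$) is more likely, thereby coupling $A_i$'s event through the shared object's position. Once $B'$ and $R'$ are \emph{pinned down} as a single fixed pair, there is no remaining freedom in the shared objects' partition-membership to be correlated, so the only randomness left in $Dom(A_i,B',R')$ is the position of $A_i$ within $A_i$ (and the positions of $b,r$ within the fixed $B',R'$), which is independent of $A_j$ by the inter-object independence assumption. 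Stating this cleanly, and making explicit that the independence is \emph{conditional on the fixed pair} rather than unconditional, is where the argument must be precise.
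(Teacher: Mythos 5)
Your proposal is correct and takes essentially the same route as the paper: the paper's entire proof is the one-line remark ``Similar to the proof of Lemma \ref{lem:independency},'' and your argument is precisely that similarity spelled out --- condition on the fixed pair $(B^\prime,R^\prime)$, observe that within this pair $B^\prime$ and $R^\prime$ are not further decomposed so the factors $P(b\in B^\prime)$ and $P(r\in R^\prime)$ are unaffected by conditioning on $Dom(A_j,B^\prime,R^\prime)$, and use inter-object independence to leave the $A_i$-marginal unchanged. Your closing discussion of why pinning down a single partition pair restores independence while a global decomposition of the shared objects destroys it is a welcome clarification beyond what the paper states.
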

\begin{proof}
Similar to the proof of Lemma \ref{lem:independency}.
\end{proof}

This allows us to individually consider the subset of possible
worlds where $b\in B^\prime$ and $r\in R^\prime$ and use Lemma
\ref{lem:part_independency} to efficiently compute the
approximated domination count probabilities
$DomCount_{LB}^k(B^\prime,R^\prime)$ and
$DomCount_{UB}^k(B^\prime,R^\prime)$ under the condition that $B$
falls into a partition $B^\prime \subseteq B$ and $R$ falls into a
partition $R^\prime \subseteq R$. This can be performed for each
pair $(B^\prime,R^\prime)\in \underline{\mathcal{B}} \times
\underline{\mathcal{R}}$, where $\underline{\mathcal{B}}$ and
$\underline{\mathcal{R}}$ denote the decompositions of $B$ and
$R$, respectively. Now, we can treat pairs of partitions
$(B^\prime,R^\prime)\in \underline{\mathcal{B}} \times
\underline{\mathcal{R}}$ independently, since all pairs of
partition represent disjunctive sets of possible worlds due to the
assumption of a disjunctive partitioning. Exploiting this
independency, the PDF of the domination count $DomCount(B,R)$ of
the total objects $B$ and $R$ can then be obtained by creating an
uncertain generating function for each pair $(B^\prime,R^\prime)$
to derive a lower and an upper bound of
$P(DomCount(B^\prime,R^\prime)=k)$ and then computing the weighted
sum of these bounds as follows:
$$
DomCount_{LB}^k(B,R)= \sum_{B^\prime \in
\underline{\mathcal{B}},R^\prime \in
\underline{\mathcal{R}}}DomCount_{LB}^k(B^\prime,R^\prime)\cdot
P(B^\prime)\cdot P(R^\prime).
$$
The complete algorithm of our domination count approximation
approach can be found in the next Section.

\section{Implementation}
\label{sec:implementation} Algorithm \ref{alg:pruning} is a
complete method for iteratively computing and refining the
probabilistic domination count for a given object $B$ and a
reference object $R$. The algorithm starts by detecting complete
domination (cf. Section \ref{sub:spatialDomination}). For each
object that completely dominates $B$, a counter
$CompleteDominationCount$ is increased and each object that is
completely dominated by $B$ is removed from further consideration,
since it has no influence on the domination count of $B$. The
remaining objects, which may have a probability greater than zero
and less than one to dominate $B$, are stored in a set
$influenceObjects$. The set $influenceObjects$ is now used to
compute the probabilistic domination count ($DomCount_{LB}$,
$DomCount_{UB}$)\footnote{$DomCount_{LB}$ and $DomCount_{UB}$ are
lists containing, at each position $i$, a lower and an upper bound
for $P(DomCount(B,R)=i)$, respectively. This notation is
equivalent to a single uncertain domination count PDF.}: The main
loop of the probabilistic domination count approximation starts in
line 14. In each iteration, $B$, $R$, and all  influence objects
are partitioned. For each combination of partitions $B^\prime$ and
$R^\prime$, and each database object $A_i \in influenceObjects$,
the probability $PDom(A_i,B^\prime,R^\prime)$ is approximated (cf.
Section \ref{subsec:MutuallyIndependentDomApprox}). These
domination probability bounds are used to build an uncertain
generating function (cf. Section \ref{subsec:UGFs}) for the
domination count of $B^\prime$ w.r.t. $R^\prime$. Finally, these
domination counts are aggregated for each pair of partitions
$B^\prime, R^\prime$ into the domination count $DomCount(B,R)$
(cf. Section \ref{subsec:ProbabilisticDominationCountEstimation}).
The main loop continues until a domain- and user-specific stop
criterion is satisfied. For example, for a threshold $k$NN query,
a stop criterion is to decide whether the lower (upper) bound that
$B$ has a domination count of less than (at least) $k$, exceeds
(falls below) the given threshold.

The progressive decomposition of objects (line 15) can be
facilitated by precomputed split points at the object PDFs. More
specifically, we can iteratively split each object $X$ by means of
a median-split-based bisection method and use a kd-tree
\cite{BenLou75} to hierarchically organize the resulting
partitions. The kd-tree is a binary tree. The root of a kd-tree
represents the complete region of an uncertain object. Every node
implicitly generates a splitting hyperplane that divides the space
into two subspaces. This hyperplane is perpendicular to a chosen
split axis and located at the median of the node's distribution in
this axis. The advantage is that, for each node in the kd-tree,
the probability of the respective subregion $X^{\prime}$ is simply
given by $0.5^{X^{\prime}.level-1}$, where $X^{\prime}.level$ is
the level of $X^\prime$. In addition, the bounds of a subregion
$X^{\prime}$ can be determined by backtracking to the root. In
general, for continuously partitioned uncertain objects, the
corresponding kd-tree may have an infinite height, however for
practical reasons, the height $h$ of the kd-tree is limited. The
choice of $h$ is a trade-off between approximation quality and
efficiency: for a very large $h$, considering each leaf node is
similar to applying integration on the PDFs, which yields an exact
result; however, the number of leaf nodes, and thus the worst case
complexity increases exponentially in $h$. Note that our
experiments (c.f. Section \ref{sec:experiments}) show that a low
$h$ value is sufficient to yield reasonably tight approximation
bounds. Yet it has to be noted, that in the general case of
continuous uncertainty, our proposed approach may only return an
approximation of the exact probabilistic domination count.
However, such an approximation may be sufficient to decide a given
predicate as we will see in Section \ref{sec:applications} and
even in the case where the approximation does not suffice to
decide the query predicate, the approximation will give the user a
confidence value, based on which a user may be able decide whether
to include an object in the result.

\begin{algorithm}[tbh]
\footnotesize
  \caption{Probabilistic Inverse Ranking}
  \label{alg:pruning}
    \begin{algorithmic}[1]
        \REQUIRE: $Q$, $B$, $\DB$
        \STATE $influenceObjects = \emptyset$
        \STATE $CompleteDominationCount = 0$
        \STATE \textit{//Complete Domination}
        \FORALL{$A_i\in\DB$}
            \IF{$DDC_{Optimal}(A_i,B,R)$}
            \STATE $CompleteDominationCount$++
%            \ELSIF{$DDC_{Optimal}(B,A_i,R)$}
%            \STATE drop $A_i$
            \ELSIF{$\neg DDC_{Optimal}(B,A_i,R)$}
                \STATE $influenceObjects =influenceObjects \cap A_i$
            \ENDIF
        \ENDFOR
        \STATE \textit{//probabilistic domination count}
        \STATE $DomCount_{LB}$= [0,...,0] //length $|\DB|$
        \STATE $DomCount_{UB}$= [1,...,1] //length $|\DB|$
        \WHILE{$\neg$ stopcriterion}
            \STATE split($R$), split($B$), split($A_i \in \DB$)

            \FORALL{$B^\prime \in B$, $R^\prime\in R$}
                \STATE $cand_{LB}$= [0,...,0] //length $|uncertainObjects|$
                \STATE $cand_{UB}$= [1,...,1] //length $|uncertainObjects|$
                \FORALL{($0<i<|influenceObjects|$)}
                    \STATE $A_i=influenceObjects[i]$
                    \FORALL{$A_i^\prime \in A_i$}
                        \IF{$DDC_{Optimal}(A_i^\prime,B^\prime,R^\prime)$}
                            \STATE $cand_{LB}[i]$+=$(P(A_i^\prime))$
                        \ELSIF{$DDC_{Optimal}(B^\prime,A_i^\prime,R^\prime)$}
                            \STATE $cand_{UB}[i]$-=$(P(A_i^\prime))$
                        \ENDIF
                    \ENDFOR
                \ENDFOR
%                \STATE $DomCount(B^\prime,R^\prime)_{LB}$= [0,...,0] //length $|\DB|$
%                \STATE $DomCount(B^\prime,R^\prime)_{UB}$= [0,...,0] //length $|\DB|$
                \STATE compute $DomCount_{LB}(B^\prime,R^\prime)$ and $DomCount_{UB}(B^\prime,R^\prime)$
                using UGFs.
                \FORALL{ ($0<i<\DB$)}
                    \STATE $DomCount_{LB}[i]$+=$DomCount(B^\prime,R^\prime)_{LB}\cdot P(B^\prime)\cdot P(R^\prime)$
                    \STATE $DomCount_{UB}[i]$+=$DomCount(B^\prime,R^\prime)_{UB}\cdot P(B^\prime)\cdot P(R^\prime)$
                \ENDFOR
            \ENDFOR
            \STATE ShiftRight($DomCount_{LB}$,$CompleteDominationCount$)
            \STATE ShiftRight($DomCount_{UB}$,$CompleteDominationCount$)
        \ENDWHILE
        \STATE return ($DomCount_{LB}$, $DomCount_{UB}$)
    \end{algorithmic}
\end{algorithm}

\section{Applications}
\label{sec:applications} In this section, we outline how the
probabilistic domination count can be used to efficiently evaluate
a variety of probabilistic similarity query types, namely the
probabilistic inverse similarity ranking query \cite{LianChen09},
the probabilistic threshold $k$-NN query \cite{CheCheCheXie09},
the probabilistic threshold reverse $k$-NN query and the
probabilistic similarity ranking query
\cite{BerKriRen08,CorLiYi09,LiSahDes09,SolIly09}. We start with
the probabilistic inverse ranking query, because it can be derived
trivially from the probabilistic domination count introduced in
Section \ref{sec:DomCount}. In the following, let
$\DB=\{A_1,...,A_N\}$ be an uncertain database containing
uncertain objects $A_1,...,A_N$.

\begin{corollary}
Let $B$ and $R$ be uncertain objects. The task is to determine the
probabilistic ranking distribution $Rank(B,R)$ of $B$ w.r.t. to
similarity to $R$, i.e. the distribution of the position
$Rank(B,R)$ of object $B$ in a complete similarity ranking of
$A_1,...,$ $A_N,B$ w.r.t. the distance to an uncertain reference
object $R$. Using our techniques, we can compute $Rank(B,R)$ as
follows:
$$
P(Rank(B,R)=i)=P(DomCount(B,R)=i-1)
$$
\end{corollary}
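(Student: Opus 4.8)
The plan is to establish the identity by unpacking both sides through their definitions and showing they describe the same event. The rank of $B$ in a similarity ranking of $A_1,\dots,A_N,B$ with respect to distance to $R$ is, by definition, one plus the number of objects strictly closer to $R$ than $B$. But ``strictly closer to $R$ than $B$'' is exactly the domination predicate $Dom(A_i,B,R)$ from Definition \ref{def:Domination}. Hence the number of objects preceding $B$ in the ranking equals $DomCount(B,R)$ as defined in Definition \ref{def:DominationCount}, and the position of $B$ is $DomCount(B,R)+1$.

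First I would make precise the (standard) convention that in a ranking the position of an object is $1$ plus the count of objects ranked ahead of it; an object with no objects closer to $R$ occupies rank $1$. Under this convention, the random variable $Rank(B,R)$ and the random variable $DomCount(B,R)$ are related by the deterministic shift $Rank(B,R) = DomCount(B,R)+1$, i.e.\ they are the same random variable up to an additive constant. Since adding a constant is a bijection on the integers, the distributions transfer directly: for any $i$,
$$
P(Rank(B,R)=i) = P(DomCount(B,R)+1=i) = P(DomCount(B,R)=i-1),
$$
which is precisely the claimed equation.

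The only subtlety worth flagging — and the single step I would treat carefully rather than routinely — is the handling of ties and the strictness of the inequality. The domination indicator $Dom(A_i,B,R)$ in Definition \ref{def:Domination} uses a strict inequality $dist(a,r)<dist(b,r)$ across all samples, so an object that is equidistant to $R$ does not count toward the domination count; one must verify that the ranking semantics used for $Rank(B,R)$ adopt the same tie-breaking convention, so that the count of ``preceding'' objects agrees with $DomCount(B,R)$ world-by-world. Provided the ranking is defined consistently with the possible-world semantics underlying $DomCount$, the equality of the two random variables holds in every possible world, and therefore the equality of their distributions follows immediately. No generating-function machinery is needed for this corollary itself; it is purely a re-indexing of the already-computed distribution of $DomCount(B,R)$, which is exactly why the corollary is described in the text as following ``trivially'' from the probabilistic domination count.
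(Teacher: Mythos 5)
Your argument is correct and matches the paper's own justification, which simply observes that ``$B$ has rank $i$'' is equivalent to ``$B$ is dominated by $i-1$ objects''; you make the same world-by-world identification $Rank(B,R)=DomCount(B,R)+1$ explicit. The extra care you take about the tie-breaking convention is a reasonable refinement but does not change the route.
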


The above corollary is evident, since the proposition ``$B$ has
rank $i$'' is equivalent to the proposition ``$B$ is dominated by
$i-1$ objects''.

The most prominent probabilistic similarity search query is the
probabilistic threshold $k$NN query.
\begin{corollary}
Let $Q=R$ be an uncertain query object and let $k$ be a scalar.
The problem is to find all uncertain objects $kNN_\tau(Q)$ that
are the $k$-nearest neighbors of $Q$ with a probability of at
least $\tau$. Using our techniques, we can compute the probability
$P^{kNN}(B,Q)$ that an object $B$ is a $k$NN of $Q$ as follows:
\label{knn}
$$
P^{kNN}(B,Q)=\sum_{i=0}^{k-1}P(DomCount(B,Q)=i)
$$
\end{corollary}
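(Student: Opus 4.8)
The plan is to establish the corollary by directly unfolding the definitions of a $k$-nearest neighbor under the possible-world semantics and relating it to the domination count. First I would recall that, by Definition \ref{def:DominationCount}, $DomCount(B,Q)$ is precisely the random variable counting how many database objects $A \in \DB$ are closer to $Q$ than $B$ is. The key observation is that an object $B$ is among the $k$-nearest neighbors of $Q$ in a given possible world if and only if at most $k-1$ other objects are closer to $Q$ than $B$ in that world; equivalently, $B$ is a $k$NN of $Q$ exactly when $DomCount(B,Q) \leq k-1$, i.e. when $DomCount(B,Q) \in \{0,1,\dots,k-1\}$.

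From this equivalence, the remaining steps are purely probabilistic. I would argue that the event ``$B$ is a $k$NN of $Q$'' decomposes, under the possible-world semantics, as a disjoint union over the value taken by the domination count:
$$
P^{kNN}(B,Q) = P\bigl(DomCount(B,Q) \leq k-1\bigr) = P\Bigl(\bigcup_{i=0}^{k-1} \{DomCount(B,Q)=i\}\Bigr).
$$
Since the events $\{DomCount(B,Q)=i\}$ are mutually exclusive for distinct values of $i$ (the count is a single integer-valued random variable and cannot simultaneously equal two different values), countable additivity of probability yields
$$
P^{kNN}(B,Q) = \sum_{i=0}^{k-1} P\bigl(DomCount(B,Q)=i\bigr),
$$
which is exactly the claimed formula with $R=Q$.

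The genuinely substantive step, and the one I would treat most carefully, is justifying the first equivalence: that being a $k$NN is equivalent to having domination count at most $k-1$. This requires attention to ties, i.e. worlds in which several objects are equidistant from $Q$. Under the strict inequality used in Definition \ref{def:Domination}, an object at exactly the same distance as $B$ does not count toward $DomCount(B,Q)$, so the notion of ``$k$-nearest neighbor'' must be taken to include all objects tied at the $k$-th distance; I would note this convention explicitly (or remark that ties occur with probability zero for continuous PDFs, so the distinction is immaterial in the continuous model). Everything else reduces to the elementary fact that summing disjoint probabilities is additive, so no further computation beyond invoking Definition \ref{def:DominationCount} and this additivity is needed.
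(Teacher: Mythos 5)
Your argument is correct and follows essentially the same route as the paper, which simply observes that ``$B$ is a $k$NN of $Q$'' is equivalent to ``$B$ is dominated by fewer than $k$ objects'' and then sums the disjoint events $\{DomCount(B,Q)=i\}$ for $i=0,\dots,k-1$. Your additional remark about ties (handled by convention or by noting they occur with probability zero for continuous PDFs) is a careful elaboration the paper leaves implicit, but it does not change the approach.
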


The above corollary is evident, since the proposition ``$B$ is a
$k$NN of $Q$'' is equivalent to the proposition ``$B$ is dominated
by less than $k$ objects''. To decide whether $B$ is a $k$NN of
$Q$, i.e. if $B\in kNN_\tau(Q)$, we just need to check if
$P^{kNN}(B,Q)>\tau$.

Next we show how to answer probabilistic threshold R$k$NN queries.

\begin{corollary}
\label{rknn} Let $Q=R$ be an uncertain query object and let $k$ be
a scalar. The problem is to find all uncertain objects $A_i$ that
have $Q$ as one of their $k$NNs with a probability of at least
$\tau$, that is, all objects $A_i$ for which it holds that $Q \in
kNN_\tau(A_i)$. Using our techniques, we can compute the
probability $P^{RkNN}(B,Q)$ that an object $B$ is a R$k$NN of $Q$
as follows:
$$
P^{RkNN}(B,Q)=\sum_{i=0}^{k-1}P(DomCount(Q,B)=i)
$$
\end{corollary}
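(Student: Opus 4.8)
The plan is to reduce the reverse $k$NN predicate to a domination-count statement by carefully tracking which object plays the role of the reference point and which plays the role of the candidate being ranked. The key observation is a definitional equivalence: the object $B$ is a reverse $k$NN of the query $Q$ precisely when $Q$ is among the $k$ nearest neighbors of $B$, i.e. when, using $B$ itself as the reference object, fewer than $k$ database objects are closer to $B$ than $Q$ is. In other words, we measure distances \emph{from} $B$ and ask how many objects $A_i$ dominate $Q$ with respect to the reference $B$. This is exactly what $DomCount(Q,B)$ counts: by Definition \ref{def:DominationCount}, $DomCount(Q,B)$ is the number of objects closer to the reference $B$ than $Q$ is.

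First I would make the role-swap explicit. In the forward $k$NN corollary (Corollary \ref{knn}), the reference object is the query $Q$ and we count objects dominating the candidate $B$, giving $DomCount(B,Q)$. For the reverse query the reference object becomes the candidate $B$, and the object being dominated is $Q$; hence the correct quantity is $DomCount(Q,B)$ rather than $DomCount(B,Q)$. I would then invoke the same event-equivalence used to justify Corollary \ref{knn}: the proposition ``$Q$ is a $k$NN of $B$'' is equivalent to ``$Q$ is dominated by fewer than $k$ objects with respect to the reference $B$,'' which translates into the event $DomCount(Q,B) < k$, i.e. $DomCount(Q,B) \in \{0,1,\dots,k-1\}$.

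The probabilistic step is then to pass from this possible-worlds event to a probability. Summing over the disjoint events $DomCount(Q,B)=i$ for $0 \le i \le k-1$ yields
$$
P^{RkNN}(B,Q)=P\bigl(DomCount(Q,B)<k\bigr)=\sum_{i=0}^{k-1}P\bigl(DomCount(Q,B)=i\bigr),
$$
which is the claimed formula. The bounds on each $P(DomCount(Q,B)=i)$ come directly from Lemma \ref{lem:lbub} applied with reference object $B$ and candidate $Q$, since the independence prerequisite (Lemma \ref{lem:independency}) holds regardless of which object is designated the reference. Finally, the decision rule mirrors the forward case: report $B$ as a result iff $P^{RkNN}(B,Q)>\tau$.

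The main obstacle is purely conceptual rather than computational: one must be scrupulous about the asymmetry of $DomCount(\cdot,\cdot)$ under exchanging its two arguments, since $DomCount(Q,B)$ and $DomCount(B,Q)$ count genuinely different sets of objects. I expect the only place an error could creep in is confusing the reference role with the candidate role; once the role-swap is pinned down, the derivation is a verbatim repetition of the forward $k$NN argument with $B$ and $Q$ interchanged, so no new machinery is needed beyond Definition \ref{def:DominationCount} and Lemma \ref{lem:lbub}.
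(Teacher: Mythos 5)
Your proposal is correct and matches the paper's own justification, which is the single observation that $B$ is a R$k$NN of $Q$ iff $Q$ is dominated fewer than $k$ times w.r.t.\ the reference $B$, i.e.\ $DomCount(Q,B)<k$. Your elaboration of the role-swap between reference and candidate and the summation over the disjoint events $DomCount(Q,B)=i$ is exactly the intended argument, just spelled out in more detail than the paper bothers to.
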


The intuition here is that an object $B$ is a R$k$NN of $Q$ if and
only if $Q$ is dominated less than $k$ times w.r.t. $B$.

For $k$NN and R$k$NN queries, the total complexity to compute the
uncertain generating function can be improved from $O(|Cand|^3)$
to $O(|Cand|\cdot k^2)$ since it can be observed from Corollaries
\ref{knn} and \ref{rknn} that for $k$NN and R$k$NN queries, we
only require the section of the PDF of $DomCount(B,R)$ where
$DomCount(B,R)<k$, i.e. we only need to know the probabilities
$P(DomCount(B,R)=x), x<k$. This can be exploited to improve the
runtime of the computation of the PDF of $DomCount(B,R)$ as
follows: Consider the iterative computation of the generating
functions $\mathcal{F}^1,...,\mathcal{F}^{|cand|}$. For each
$\mathcal{F}^l, 1\leq l\leq |cand|$, we only need to consider the
coefficients $c_{i,j}$ in the generating function $\mathcal{F}^i$
where $i<k$, since only these coefficients have an influence on
$P(DomCount(B,R)=x), x<k$ (cf. Section \ref{lem:lbub}). In
addition, we can merge all coefficients $c_{i,j}$,
$c_{i^\prime,j^\prime}$ where $i=i^\prime$, $i+j>k$ and
$i^\prime+j^\prime>k$, since all these coefficients only differ in
their influence on the upper bounds of $P(DomCount(B,R)=x), x\geq
k$, and are treated equally for $P(DomCount(B,R)=x), x<k$. Thus,
each $\mathcal{F}^l$ contains at most $\sum_{i=1}^{k+1}i$
coefficients (one $c_{i,j}$ for each combination of $i$ and $j$
where $i+j\leq k$). Thus reducing the total complexity to
$O(k^2\cdot |cand|)$.

Finally, we show how to compute the expected rank (cf.
\cite{CorLiYi09}) of an uncertain object.
\begin{corollary}
Let $Q=R$ be an uncertain query object. The problem is to rank the
uncertain objects $A_i$ according to their expected rank
$E(Rank(A_i))$ w.r.t. the distance to $Q$. The expected rank of an
uncertain object $A_i$ can be computed as follows:
$$
E(Rank(A_i))=\sum_{i=0}^{N-1}P(DomCount(Q,B)=i)\cdot (i+1)
$$
\end{corollary}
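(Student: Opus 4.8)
The plan is to obtain the result as an immediate consequence of the definition of expectation for a discrete random variable, combined with the rank-to-domination-count identity already established as the first corollary of this section. First I would record the pointwise relationship between the two random variables. By definition, the rank of the object $B=A_i$ in the similarity ranking of $A_1,\ldots,A_N$ with respect to the distance to $Q$ is exactly one more than the number of objects that are closer to $Q$ than $B$; that is, in every possible world $Rank(B,Q)=DomCount(B,Q)+1$. Translated to probabilities, this is precisely the statement $P(Rank(B,Q)=i)=P(DomCount(B,Q)=i-1)$ proved earlier, so I may invoke it directly rather than re-deriving it.

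Next I would expand the expected rank using the standard definition of the expected value of a discrete random variable, namely $E(Rank(B,Q))=\sum_i i\cdot P(Rank(B,Q)=i)$. Substituting the identity above and reindexing with $j=i-1$ converts the sum over ranks into a sum over domination counts, yielding $E(Rank(B,Q))=\sum_j (j+1)\cdot P(DomCount(B,Q)=j)$, which is the claimed expression.

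It then remains to fix the range of summation. Since $B$ is ranked among the remaining $N-1$ competitors in $\DB$, the random variable $DomCount(B,Q)=\sum_{A\in\DB,\,A\neq B}Dom(A,B,Q)$ is a sum of $N-1$ Bernoulli indicators and therefore takes values in $\{0,1,\ldots,N-1\}$ (equivalently, $Rank(B,Q)\in\{1,\ldots,N\}$). Restricting the sum to this support gives exactly $E(Rank(A_i))=\sum_{j=0}^{N-1}P(DomCount(B,Q)=j)\cdot(j+1)$, matching the stated formula.

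There is no substantial obstacle here: the entire argument is the definition of expectation together with a reindexing, and the only nontrivial ingredient (that rank equals domination count plus one in the possible-worlds sense) is granted by the preceding corollary. The only points requiring care are the bookkeeping of the shift $j=i-1$ and the identification of the summation bounds; I would also note in passing that the index $i$ is overloaded in the displayed statement and that the argument order $DomCount(Q,B)$ appears to be a typographical slip for $DomCount(B,Q)$, consistent with the inverse-ranking corollary on which the proof rests.
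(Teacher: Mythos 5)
Your proposal is correct and matches the argument the paper intends: the preceding inverse-ranking corollary gives $P(Rank(B,Q)=i)=P(DomCount(B,Q)=i-1)$, and the expected-rank formula is then just the definition of expectation after the shift $j=i-1$, with the support $\{0,\ldots,N-1\}$ coming from $DomCount$ being a sum of $N-1$ Bernoulli indicators. You are also right that the displayed statement overloads the index $i$ and that $DomCount(Q,B)$ is a slip for $DomCount(A_i,Q)$; the paper offers no further proof beyond this evident reduction, so your route is the same as its implicit one.
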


Other probabilistic similarity queries (e.g. $k$NN and R$k$NN
queries with a different uncertainty predicate instead of a
threshold $\tau$) can be approximated efficiently using our
techniques as well. Details are omitted due to space constraints.

\section{Experimental Evaluation}
\label{sec:experiments}

In this section, we review the characteristics of the proposed
algorithm on synthetic and real-world data. The algorithm will be
referred to as \textbf{IDCA} (Iterative Domination Count
Approximation). We performed experiments under various parameter
settings. Unless otherwise stated, for 100 queries, we chose $B$
to be the object with the $10^{th}$ smallest MinDist to the
reference object $R$. We used a synthetic dataset with 10,000
objects modeled as 2D rectangles. The degree of uncertainty of the
objects in each dimension is modeled by their relative extent. The
extents were generated uniformly and at random with 0.004 as
maximum value. For the evaluation on real-world data, we utilized
the International Ice Patrol (IIP) Iceberg Sightings
Dataset\footnote{The IIP dataset is available at the National Snow
and Ice Data Center (NSIDC) web site
(\emph{http://nsidc.org/data/g00807.html}).}. This dataset
contains information about iceberg activity in the North Atlantic
in 2009. The latitude and longitude values of sighted icebergs
serve as certain 2D mean values for the 6,216 probabilistic
objects that we generated. Based on the date and the time of the
latest sighting, we added Gaussian noise to each object, such that
the passed time period since the latest date of sighting
corresponds to the degree of uncertainty (i.e. the extent). The
extents were normalized w.r.t. the extent of the data space, and
the maximum extent of an object in either dimension is 0.0004.

\begin{figure}[t]
    \centering
    \includegraphics[width = 0.4\columnwidth]{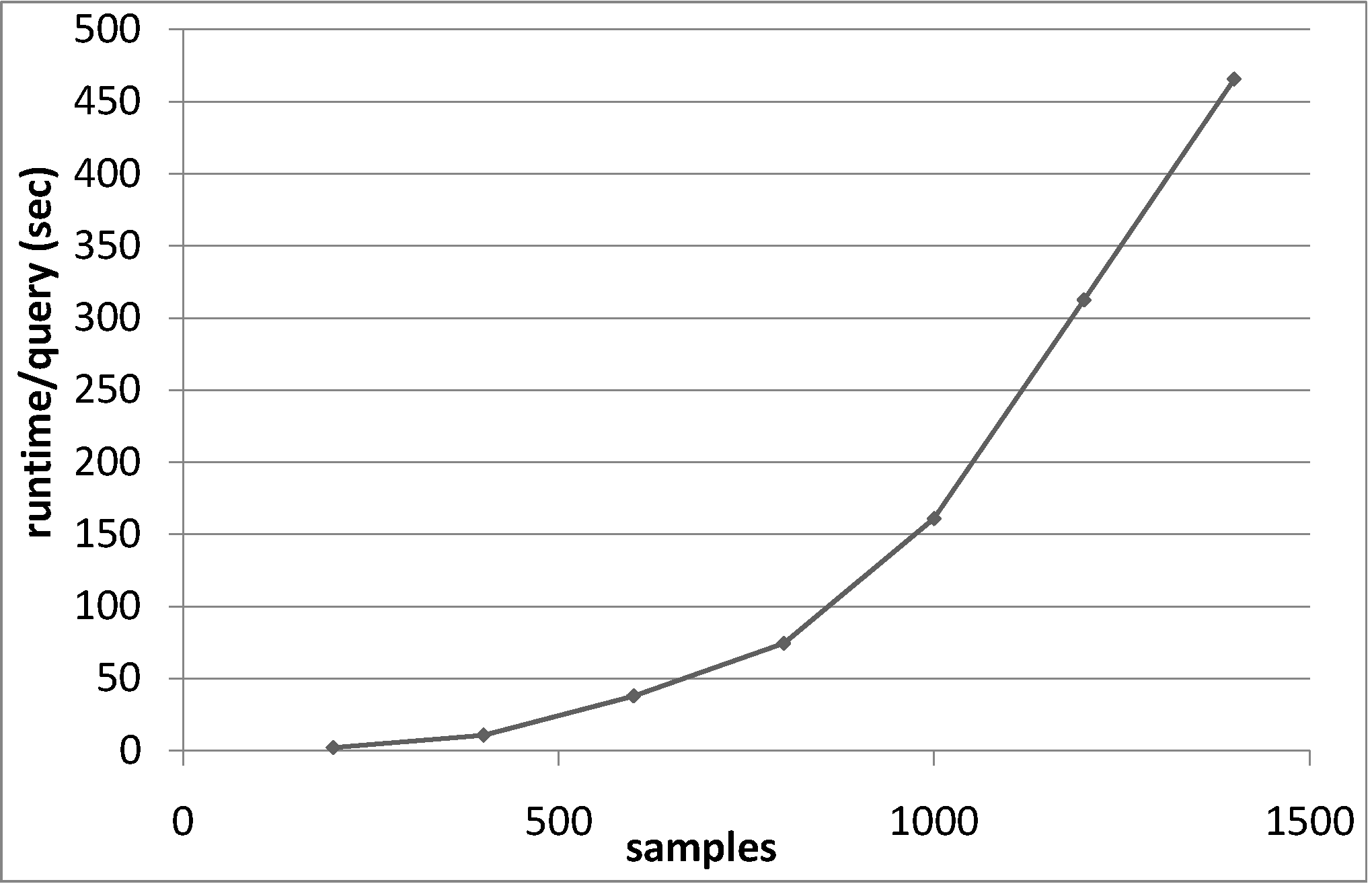}
    \caption{Runtime of \textbf{MC} for increasing sample size.}
  \label{fig:naive}
\end{figure}

\subsection{Runtime of the Monte-Carlo-based Approach}
\label{subsec:montecarlo} To the best of our knowledge, there
exists no approach which is able to process uncertain similarity
queries on probabilistic databases with continuous PDFs. A naive
approach needs to consider all possible worlds and thus needs to
integrate over all object PDFs, implying a runtime exponentially
in the number of objects. Since this is not applicable even for
small databases, we adapted an existing approach to cope with the
conditions. The approach most related to our work is
\cite{LianChen09}, which solves the problem of computing the
domination count for a certain query and discrete distributions
within the database objects. Thus the proposed comparison partner
works as follows: Draw a sufficiently large number $S$ of samples
from each object by Monte-Carlo-Sampling. Then, for each sample
$q_i \in Q$ of the query, apply the algorithm proposed in
\cite{LianChen09} to compute an exact probabilistic domination
count PDF of an object $B$. As proposed in \cite{LianChen09}, this
is done using the generating function technique and using an
\emph{and/xor tree} to combine individual samples into discrete
distributed uncertain objects. Finally, accumulate the resulting
certain domination count PDFs of each $q_i \in Q$ into a single
domination count PDF by taking the average. The execution time for
this approach, which we will refer to as \textbf{MC} in the
following, is shown in Figure \ref{fig:naive}. It can be observed
that for a reasonable sample size (which is required to achieve a
result that is close to the correct result with high probability)
the runtime becomes very large.

Note that our comparison partner only works for discrete uncertain
data (cf. Section~\ref{subsec:montecarlo}). To make a fair
comparison our approach relies on the same uncertainty model
(default: 1000 samples/object). Nevertheless, all the experiments
yield analogous results for continuous distributions.

\begin{figure}[t]
    \centering
    \subfigure[Candidates after spatial pruning.]{
        \label{fig:cands}
        \includegraphics[width = 0.4\columnwidth]{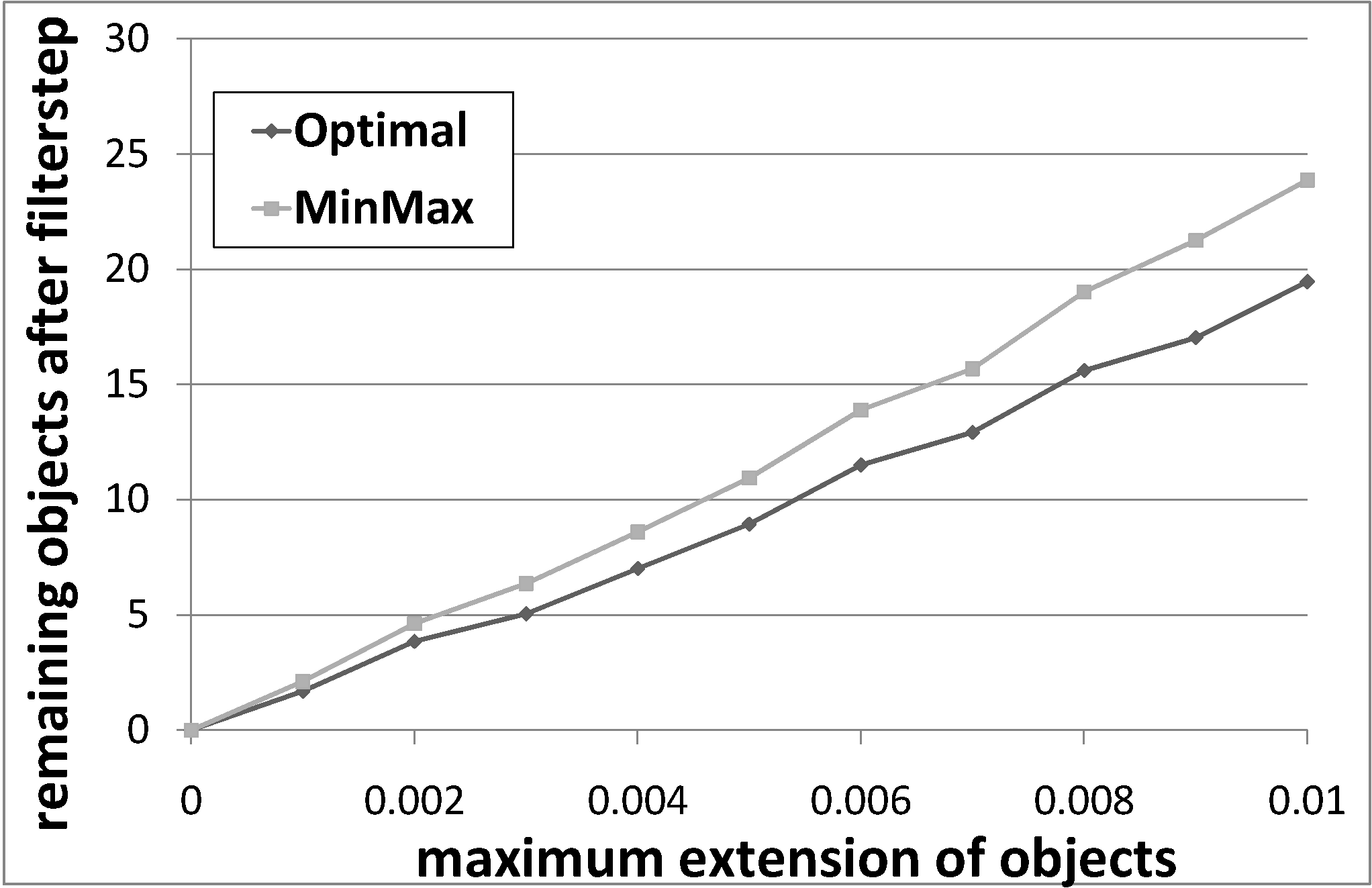}
    }
    \subfigure[Accumulated uncertainty of result.]{
        \label{fig:optimal}
        \includegraphics[width = 0.4\columnwidth]{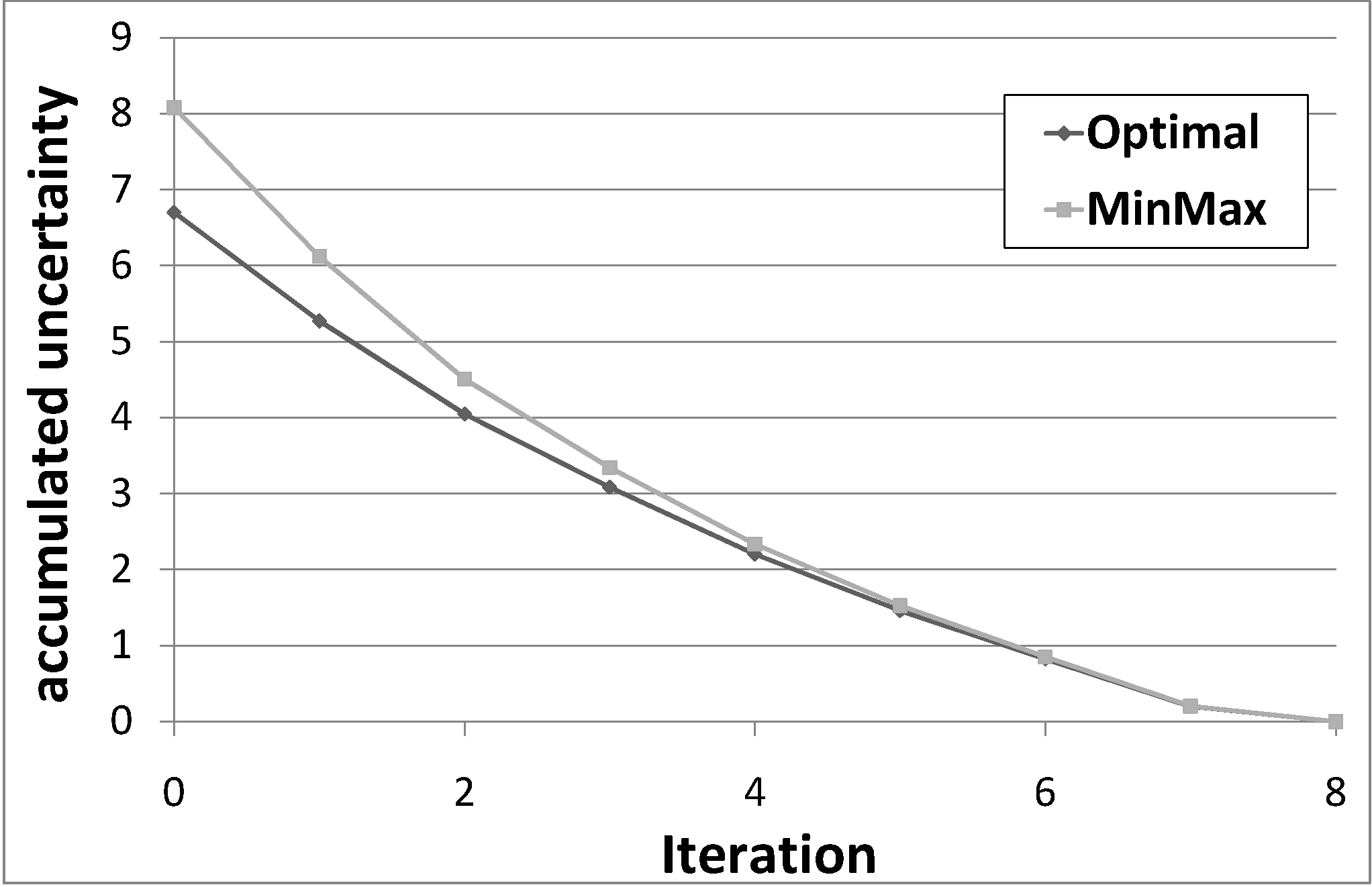}
    }
    \caption{Optimal vs. MinMax decision criterion.}
\end{figure}

\begin{figure}[t]
    \centering
    \subfigure[Synthetic Data]{
        \includegraphics[width = 0.4\columnwidth]{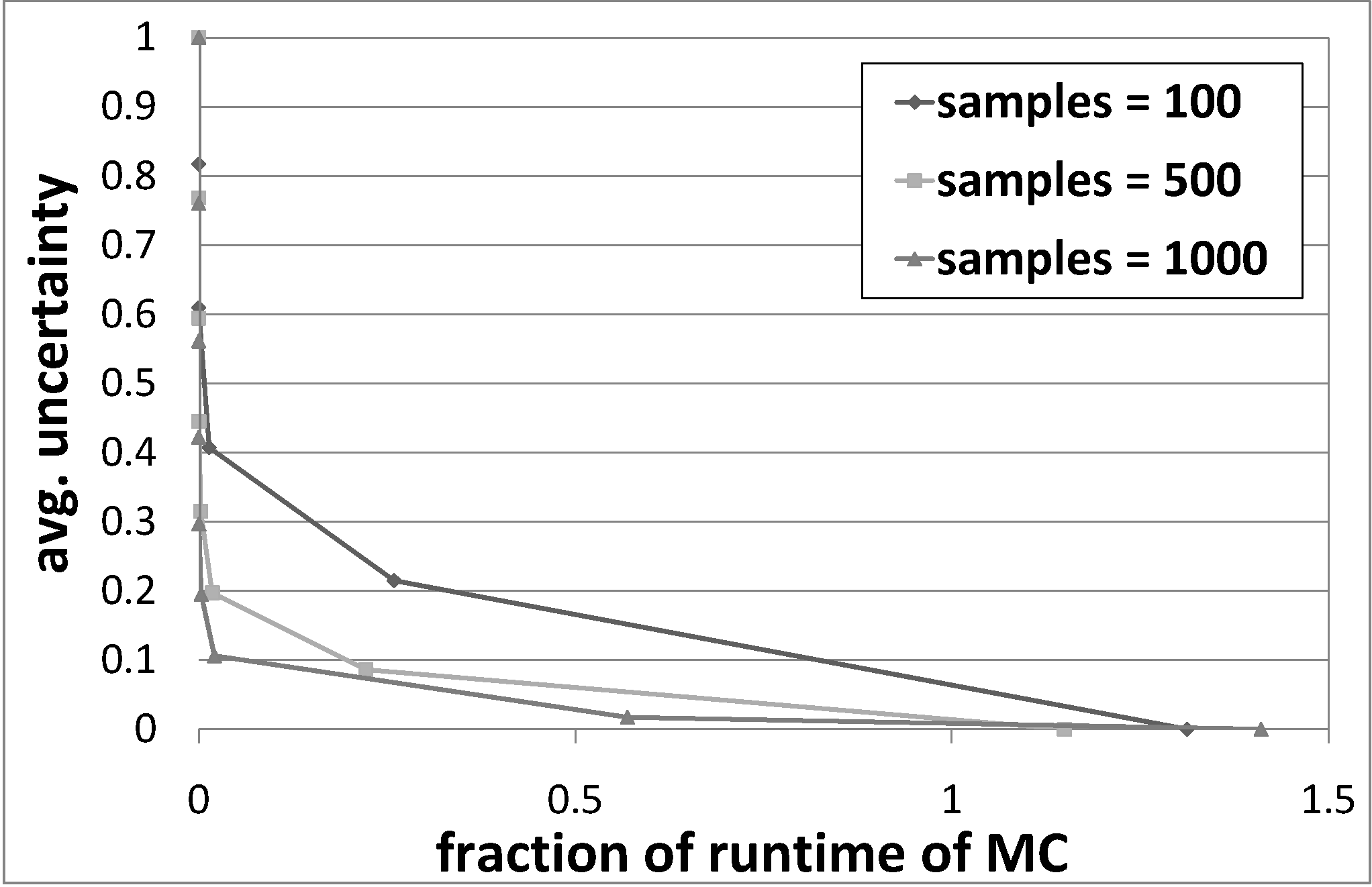}
    }
    \subfigure[Real Data]{
    \label{fig:real}
        \includegraphics[width = 0.4\columnwidth]{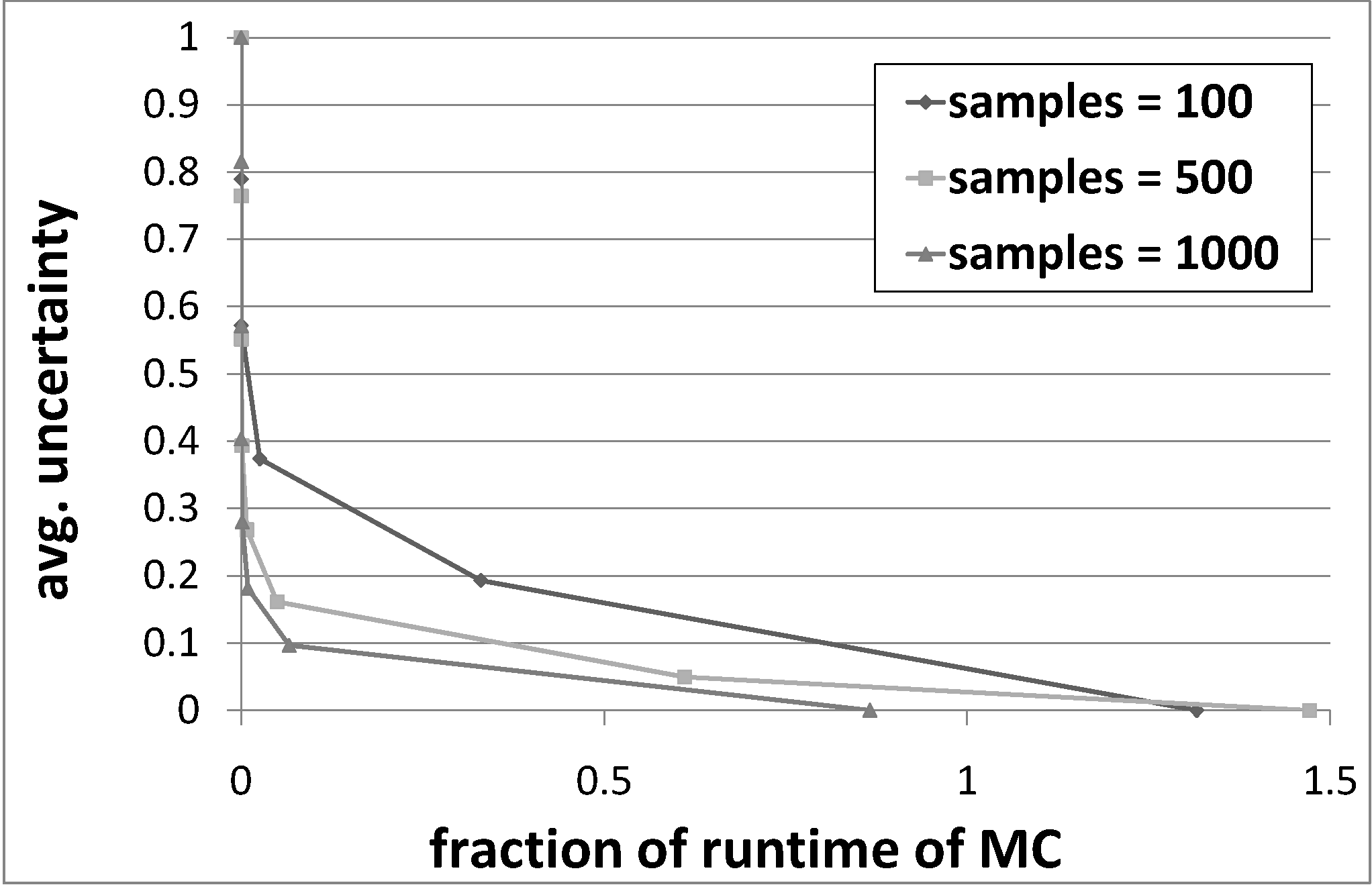}
    }
    \caption{{\small Uncertainty of \textbf{IDCA} w.r.t. the relative runtime
    to \textbf{MC}.}}

    \label{fig:iterative}
\end{figure}

\subsection{Optimal vs. Min/Max Decision Criterion}
In the first experiment, we evaluate the gain of pruning power
using the complete similarity domination technique (cf. Section
\ref{sub:spatialDomination}) instead of the state-of-the-art
min/max decision criterion to prune uncertain objects from the
search space. The first experiment evaluates the number of
uncertain objects that cannot be pruned using complete domination
only, that is the number of candidates are to evaluate in our
algorithm. Figure \ref{fig:cands} shows that our domination
criterion (in the following denoted as optimal) is able to prune
about 20\% more candidates than the min/max pruning criterion. In
addition, we evaluated the domination count approximation quality
(in the remainder denoted as uncertainty) after each decomposition
iteration of the algorithm, which is defined as the sum
$\sum_{i=0}^{N}DomCount_{UB}^i(B,R)-DomCount_{LB}^i(B,R)$. The
result is shown in Figure \ref{fig:optimal}. The improvement of
the complete domination (denoted as iteration $0$) can also be
observed in further iterations. After enough iterations, the
uncertainty converges to zero for both approaches.

\subsection{Iterative Domination Count Approximation}
Next, we evaluate the trade-off of our approach regarding
approximation quality and the invested runtime of our domination
count approximation. The results can be seen in Figure
\ref{fig:iterative} for different sample sizes and datasets. It
can be seen that initially, i.e. in the first iterations, the
average approximation quality (avg. uncertainty of an
\emph{influenceObject}) decreases rapidly. The less uncertainty
left, the more computational power is required to reduce it any
further. Except for the last iteration (resulting in 0
uncertainty) each of the previous iterations is considerably
faster than \textbf{MC}. In some cases (see Figure~\ref{fig:real})
\textbf{IDCA} is even faster in computing the exact result.

\begin{figure}[t]
    \centering
    \includegraphics[width = 0.4\columnwidth]{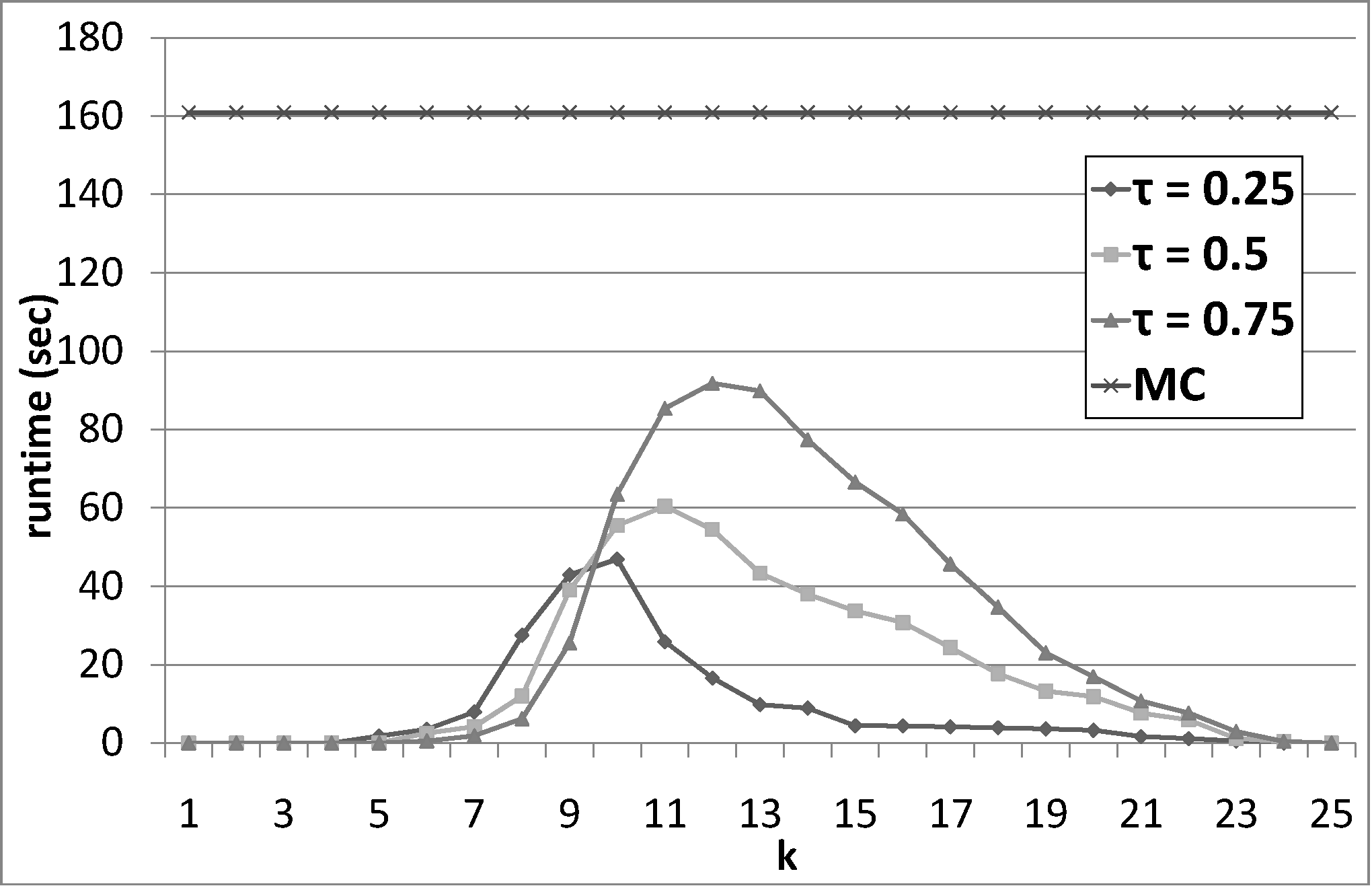}
    \caption{Runtimes of \textbf{IDCA} and \textbf{MC} for different
    query predicates $k$ and $\tau$.}
  \label{fig:predicate}
\end{figure}

\subsection{Queries with a Predicate}
Integrated in an application one often wants to decide whether an
object satisfies a predicate with a certain probability. In the
next experiment, we posed queries in the form: Is object $B$ among
the $k$ nearest neighbors of $Q$ (predicate) with a probability of
25\%, 50\%, 75\%? The results are shown in Figure
\ref{fig:predicate} for various $k$-values. With a given
predicate, \textbf{IDCA} is often able to terminate the iterative
refinement of the objects earlier in most of the cases, which
results in a runtime which is orders of magnitude below
\textbf{MC}. In average the runtime is below \textbf{MC} in all
settings.

\subsection{Number of influenceObjects}
The runtime of the algorithm is mainly dependent on the number of
objects which are responsible for the uncertainty of the rank of
$B$. The number of \emph{influenceObjects} depends on the number
of objects in the database, the extension of the objects and the
distance between $Q$ and $B$. The larger this distance, the higher
the number of \emph{influenceObjects}. For the experiments in
Figure \ref{fig:candidates-runtime} we varied the distance between
$Q$ and $B$ and measured the runtime for each iteration. In Figure
\ref{fig:scaling} we present runtimes for different sizes of the
database. The maximum extent of the objects was set to 0.002 and
the number of objects in the database was scaled from 20,000 to
100,000. Both experiments show that \textbf{IDCA} scales well with
the number influencing objects.

    \begin{figure}[t]
    \centering
    \subfigure[Runtime w.r.t. number of influence objects.]{
        \label{fig:candidates-runtime}
        \includegraphics[width = 0.4\columnwidth]{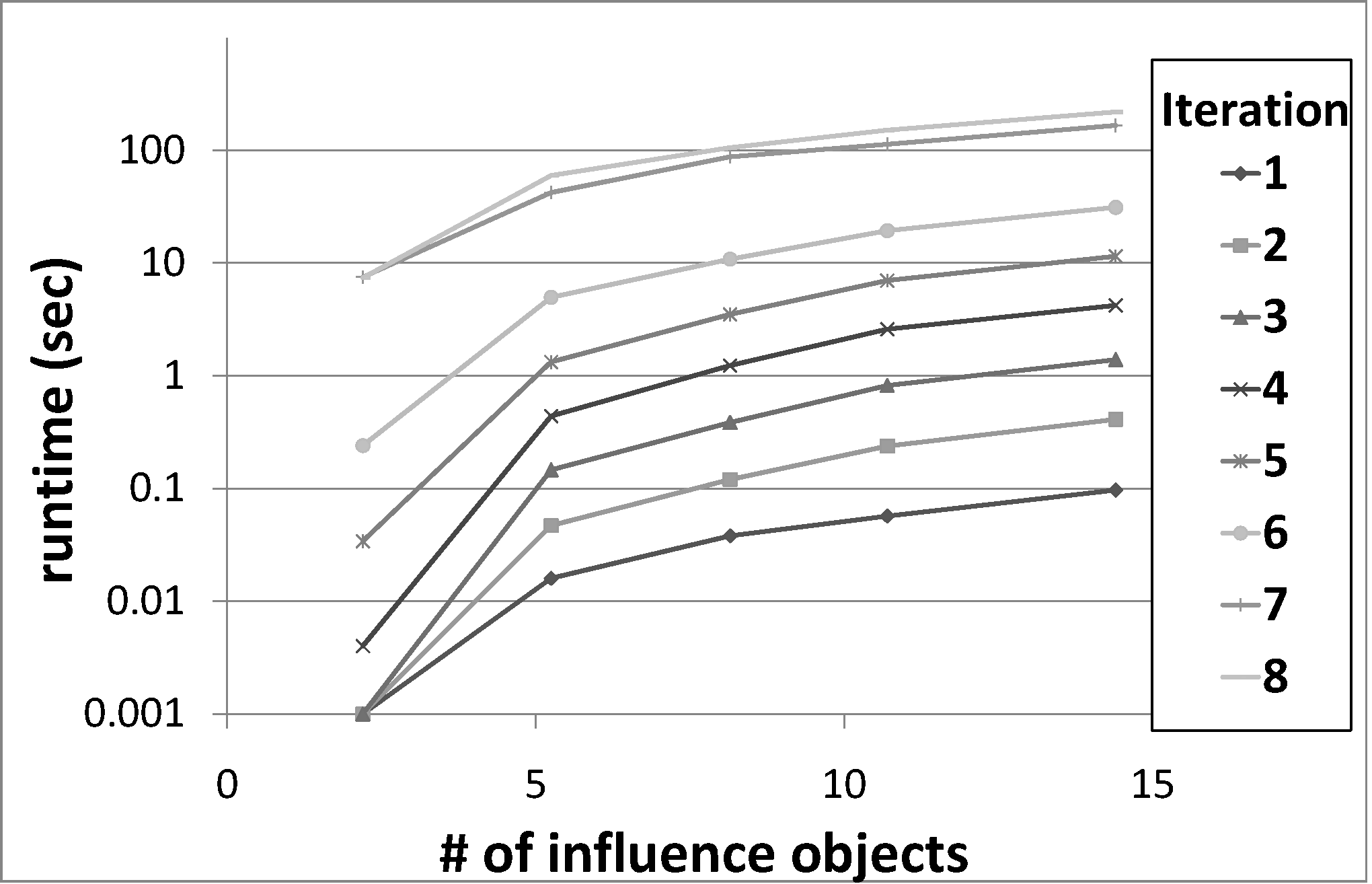}
    }
    \subfigure[Runtime for different sizes of the database.]{
        \label{fig:scaling}
        \includegraphics[width = 0.4\columnwidth]{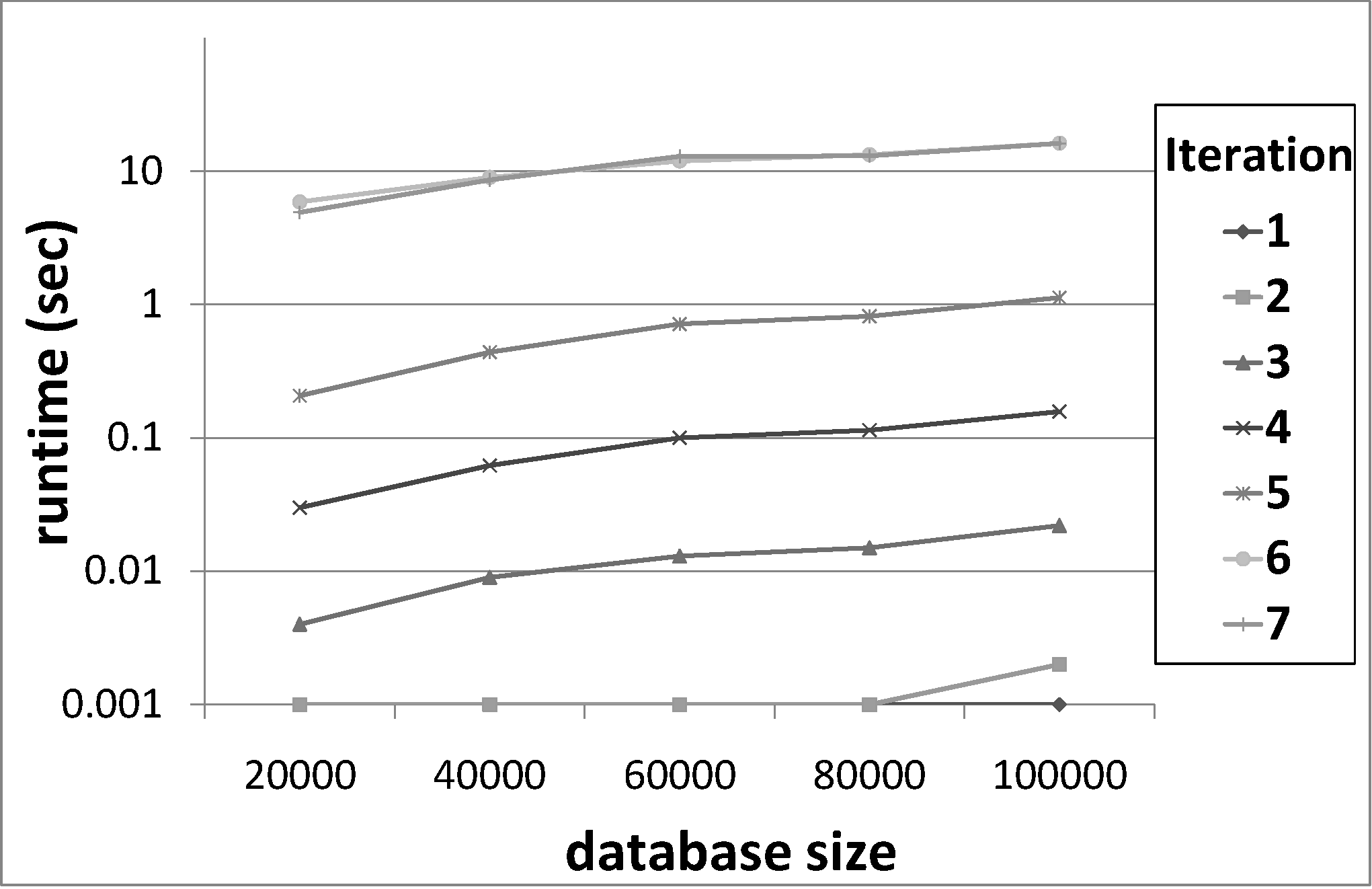}
    }
    \caption {Impact of influencing objects.}
\end{figure}

\section{Conclusions}
\label{sec:Conclusions} In this paper, we applied the concept of
probabilistic similarity domination on uncertain data. We
introduced a geometric pruning filter to conservatively and
progressively approximate the probability that an object is being
dominated by another object. An iterative filter-refinement
strategy is used to stepwise improve this approximation in an
efficient way. Specifically we propose a method to efficiently and
effectively approximate the domination count of an object using a
novel technique of uncertain generating functions. We show that
the proposed concepts can be used to efficiently answer a wide
range of probabilistic similarity queries while keeping
correctness according to the possible world semantics. Our
experiments show that our iterative filter-refinement strategy is
able to achieve a high level of precision at a low runtime. As
future work, we plan to investigate further heuristics for the
refinement process in each iteration of the algorithm. Furthermore
we will integrate our concepts into existing index supported
$k$NN- and R$k$NN-query algorithms.

\section*{Acknowledgements}
{This work was supported by a grant from the Germany/Hong Kong
Joint Research Scheme sponsored by the Research Grants Council of
Hong Kong (Reference No. G\_HK030/09) and the Germany Academic
Exchange Service of Germany (Proj. ID 50149322).

{
\bibliographystyle{abbrv}
\bibliography{abbrev,literature}

\begin{thebibliography}{10}

\bibitem{AntJanKocOlt07}
L.~Antova, T.~Jansen, C.~Koch, and D.~Olteanu.
\newblock Fast and simple relational processing of uncertain data.
\newblock In {\em Proc.\ ICDE}, 2008.

\bibitem{BenLou75}
J.~L. Bentley.
\newblock Multidimensional binary search trees used for associative searching.
\newblock {\em Commun. ACM}, 18(9):509--517, 1975.

\bibitem{ICDE2010extended}
T.~Bernecker, T.~Emrich, H.-P. Kriegel, N.~Mamoulis, M.~Renz, and A.~Züfle.
\newblock A novel probabilistic pruning approach to speed up similarity queries
  in uncertain databases.
  \url{http://www.dbs.ifi.lmu.de/Publikationen/Papers/ICDE2010_TR.pdf}.
\newblock Technical report.

\bibitem{BerKriRen08}
T.~Bernecker, H.-P. Kriegel, and M.~Renz.
\newblock Proud: Probabilistic ranking in uncertain databases.
\newblock In {\em Proc.\ SSDBM}, pages 558--565, 2008.

\bibitem{BerKriRenVerZue09}
T.~Bernecker, H.-P. Kriegel, M.~Renz, F.~Verhein, and A.~Z{\"u}fle.
\newblock Probabilistic frequent itemset mining in uncertain databases.
\newblock In {\em Proc.\ KDD}, pages 119--128, 2009.

\bibitem{BesSolIly08}
G.~Beskales, M.~A. Soliman, and I.~F. IIyas.
\newblock Efficient search for the top-k probable nearest neighbors in
  uncertain databases.
\newblock {\em Proc. VLDB Endow.}, 1(1):326--339, 2008.

\bibitem{CheLinWanZhaPei10}
M.~A. Cheema, X.~Lin, W.~Wang, W.~Zhang, and J.~Pei.
\newblock Probabilistic reverse nearest neighbor queries on uncertain data.
\newblock {\em IEEE Trans. Knowl. Data Eng.}, 22(4):550--564, 2010.

\bibitem{CheChe07}
J.~Chen and R.~Cheng.
\newblock Efficient evaluation of imprecise location-dependent queries.
\newblock In {\em Proc.\ ICDE}, 2007.

\bibitem{CheCheMokCho08}
R.~Cheng, J.~Chen, M.~Mokbel, and C.~Chow.
\newblock Probabilistic verifiers: Evaluating constrained nearest-neighbor
  queries over uncertain data.
\newblock In {\em Proc.\ ICDE}, 2008.

\bibitem{CheCheCheXie09}
R.~Cheng, L.~Chen, J.~Chen, and X.~Xie.
\newblock Evaluating probability threshold k-nearest-neighbor queries over
  uncertain data.
\newblock In {\em EDBT}, pages 672--683, 2009.

\bibitem{CheKalPra04}
R.~Cheng, D.~Kalashnikov, and S.~Prabhakar.
\newblock Querying imprecise data in moving object environments.
\newblock In {\em IEEE TKDE}, 2004.

\bibitem{CheSinPra05}
R.~Cheng, S.~Singh, and S.~Prabhakar.
\newblock U-dbms: a database system for managing constantly-evolving data.
\newblock In {\em Proc.\ VLDB}, 2005.

\bibitem{CheXiaPraShaVit04}
R.~Cheng, Y.~Xia, S.~Prabhakar, R.~Shah, and J.~Vitter.
\newblock Efficient indexing methods for probabilistic threshold queries over
  uncertain data.
\newblock In {\em Proc.\ VLDB}, pages 876--887, 2004.

\bibitem{CorLiYi09}
G.~Cormode, F.~Li, and K.~Yi.
\newblock Semantics of ranking queries for probabilistic data and expected
  results.
\newblock In {\em Proc.\ ICDE}, 2009.

\bibitem{EmrKriKroRenZue10}
T.~Emrich, H.-P. Kriegel, P.~Kr{\"o}ger, M.~Renz, and A.~Z{\"u}fle.
\newblock Boosting spatial pruning: On optimal pruning of mbrs.
\newblock In {\em Proc.\ SIGMOD}, 2010.

\bibitem{HuaPeiZhaLin08}
M.~Hua, J.~Pei, W.~Zhang, and X.~Lin.
\newblock Ranking queries on uncertain data: a probabilistic threshold
  approach.
\newblock In {\em Proc.\ SIGMOD}, 2008.

\bibitem{IjiIsh09}
Y.~Iijima and Y.~Ishikawa.
\newblock Finding probabilistic nearest neighbors for query objects with
  imprecise locations.
\newblock In {\em Proc.\ MDM}, 2009.

\bibitem{KriKunRen07}
H.-P. Kriegel, P.~Kunath, and M.~Renz.
\newblock Probabilistic nearest-neighbor query on uncertain objects.
\newblock In {\em DASFAA}, pages 337--348, 2007.

\bibitem{LiSahDes09}
J.~Li, B.~Saha, and A.~Deshpande.
\newblock A unified approach to ranking in probabilistic databases.
\newblock {\em PVLDB}, 2(1):502--513, 2009.

\bibitem{LianChen09a}
X.~Lian and L.~Chen.
\newblock Efficient processing of probabilistic reverse nearest neighbor
  queries over uncertain data.
\newblock {\em VLDB J.}, 18(3):787--808, 2009.

\bibitem{LianChen09}
X.~Lian and L.~Chen.
\newblock Probabilistic inverse ranking queries over uncertain data.
\newblock In {\em Proc.\ DASFAA}, pages 35--50, 2009.

\bibitem{LjoSin07}
V.~Ljosa and A.~K. Singh.
\newblock Apla: Indexing arbitrary probability distributions.
\newblock In {\em ICDE}, pages 946--955, 2007.

\bibitem{SenDes07}
P.~Sen and A.~Deshpande.
\newblock Representing and querying correlated tuples in probabilistic
  databases.
\newblock In {\em Proc.\ ICDE}, 2007.

\bibitem{SinMayMitPraetal08}
S.~Singh, C.~Mayfield, S.~Mittal, S.~Prabhakar, S.~E. Hambrusch, and R.~Shah.
\newblock Orion 2.0: native support for uncertain data.
\newblock In {\em Proc.\ SIGMOD}, pages 1239--1242, 2008.

\bibitem{SolIly09}
M.~Soliman and I.~Ilyas.
\newblock Ranking with uncertain scores.
\newblock In {\em Proc.\ ICDE}, pages 317--328, 2009.

\bibitem{TaoCheXiaNgaetal05}
Y.~Tao, R.~Cheng, X.~Xiao, W.~K. Ngai, B.~Kao, and S.~Prabhakar.
\newblock Indexing multi-dimensional uncertain data with arbitrary probability
  density functions.
\newblock In {\em Proc.\ VLDB}, 2005.

\bibitem{WolSisChaYes99}
O.~Wolfson, A.~P. Sistla, S.~Chamberlain, and Y.~Yesha.
\newblock Updating and querying databases that track mobile units.
\newblock {\em Distributed and Parallel Databases}, 7(3):257--387, 1999.

\end{thebibliography}
}

\end{document}